\newcommand{\citet}[1]{\cite{#1}} 
\newif\ifsubmit
\newcommand{\EZ}[1]{{#1}} 
\newcommand{\FD}[1]{{#1}} 
\newcommand{\DA}[1]{{#1}} 
\newcommand{\EZComm}[1]{} 
\newcommand{\FDComm}[1]{} 
\newcommand{\DAComm}[1]{} 
\newcommand{\EZ}[1]{\textcolor{blue}{#1}} 
\newcommand{\FD}[1]{{\color{red}#1}} 
\newcommand{\DA}[1]{\textcolor{magenta}{#1}} 
\newcommand{\EZComm}[1]{{\scriptsize\textcolor{blue}{[\bf{Elena: }#1}]}}
\newcommand{\FDComm}[1]{{\scriptsize\textcolor{red}{[\bf{Francesco: }#1}]}}
\newcommand{\DAComm}[1]{{\scriptsize\textcolor{magenta}{[\bf{Davide: }#1}]}}
  \title[]{Flexible coinductive logic programming}
  \author[F. Dagnino, D. Ancona, and E. Zucca]
         {FRANCESCO DAGNINO, DAVIDE ANCONA, ELENA ZUCCA\\
         DIBRIS, University of Genova\\
         \email{francesco.dagnino@dibris.unige.it,\{davide.ancona,elena.zucca\}@unige.it}}
\newtheorem{theorem}{Theorem}[section] 
\newtheorem{definition}{Definition}[section]
\newtheorem{lemma}{Lemma}[section]
\newtheorem{proposition}{Proposition}[section] 
\begin{document}
%\nocite{*}% includes all entries of BibTeX database into the list of references.

\label{firstpage}

\maketitle
\EZComm{16 pagine inclusa biblio}

  \begin{abstract}
Recursive definitions of predicates are usually interpreted either inductively or coinductively. 
Recently, a more powerful approach has been proposed, called \emph{flexible coinduction}, to express a variety of intermediate interpretations, necessary in some cases to get the correct meaning. 
We {provide a detailed formal account of} an extension of logic programming supporting flexible coinduction. 
Syntactically, programs are enriched by \emph{coclauses}, clauses with a special meaning used to tune the interpretation of predicates. 
As usual, the declarative semantics can be expressed as a fixed point which, however, is not necessarily the least, nor the greatest one, but is determined by the coclauses. 
Correspondingly, the operational semantics is a combination of standard SLD resolution and coSLD resolution. 
We prove that the operational semantics is sound and complete with respect to declarative semantics restricted to finite comodels.  
This paper is under consideration for acceptance in TPLP. 
\end{abstract}

  \begin{keywords}
coinduction, operational semantics, declarative semantics, soundness, completeness
  \end{keywords}

%\tableofcontents

\section{Introduction} 
Standard inductive and coinductive semantics of logic programs sometimes {are} not enough to properly {define} predicates on possibly infinite terms {\cite{SimonBMG07,Ancona13}}. 

Consider the logic program in~\refToFigure{intro-example}, defining some predicates on  lists of numbers represented with the standard Prolog syntax. 
For simplicity, we consider built-in numbers, as in Prolog. 
\begin{figure}
\figrule
\programmath
\[
\begin{array}{lcl}
all\_pos([~]) & \clSep& \\
all\_pos([N|L]) &\clSep& N>0,~all\_pos(L).\\
member(X,[X|\underline{~~}]) &\clSep&\\
member(X,[Y|L]) &\clSep&X\neq Y,~ member(X,L).\\
maxElem([N],N) &\clSep&\\
maxElem([N|L],M) &\clSep& maxElem(L,M_1),~ M \textrm{ is } \max(N,M_1).
\end{array}
\]
\unprogrammath
\caption{An example of logic program: $all\_pos(l)$ succeeds iff $l$ contains only positive numbers,
$member(x,l)$ succeeds iff $x$ is in $l$, $maxElem(l,x)$ succeeds iff $x$ is the greatest number in $l$.}\label{fig:intro-example}
\figrule
\end{figure}

In standard logic programming, terms are inductively defined, that is, are finite, and predicates are inductively defined as well. In the example program, only finite lists are considered, \EZ{such as, e.g., \lstinline{[1|[2|[]]]}, }
and the three predicates are correctly defined on such lists. 
%Formally, the universe is the \emph{Herbrand basis}, and the semantics of a program The inductive interpretation 
%Formally, set $\TOp{\lpProg}$ the inference operator associated to a program $\lpProg$, the semantics of $\lpProg$ is the smallest subset of the universe such that $\TOp{\lpProg}(I) \subseteq I$.   

Coinductive logic programming {(coLP)} \cite{Simon06} extends standard logic programming with the ability of reasoning
about infinite objects and their properties. Terms are coinductively defined, that is, can be infinite, and predicates are coinductively defined as well. In the example, also infinite lists, \EZ{such as \lstinline{[1|[2|[3|[4|...]]]]},}
are considered, and the coinductive interpretation of $all\_pos$ gives the expected meaning on such lists. 
However, this is not the case for the other two predicates: for $member$ the correct interpretation is still the inductive one,  as in the coinductive semantics 
$member(x,l)$ always {succeeds} for an infinite list $l$.  \EZ{For instance, for $L$ the infinite list of $0$'s, $member(1,L)$  has an infinite proof tree where for each node we apply the second clause.}
Therefore, these two predicates cannot coexist in the same program, as they require two different interpretations.\footnote{{To overcome this issue, \emph{co-logic programming} \citet{SimonBMG07} marks predicates as either inductive or coinductive. 
The declarative semantics, however, becomes quite complex, because stratification is needed.}}

The predicate $maxElem$ shows an even worse situation. 
The inductive semantics again does not work on infinite lists, but also the coinductive one is not correct:
$maxElem(l,n)$ {succeeds} whenever $n$ is greater than all the elements of $l$. 
The expected meaning lies between the inductive and the coinductive semantics, hence, to get it, we need something beyond standard semantics. 

Recently,  in the more general context of inference systems \cite{Aczel77}, \emph{flexible coinduction} has been proposed by Dagnino and Ancona et al. \citeyear{Dagnino17,AnconaDZ@esop17,Dagnino19}, a generalisation able to express a variety of intermediate interpretations. As we recall in \cref{sect:inf-sys-lp}, clauses of a logic program can be seen as meta-rules of an inference system where judgments are ground atoms. Inference rules are ground instances of clauses, and a ground atom is valid if it has a finite proof tree in the inductive interpretation, a possibly infinite proof tree in the coinductive one. 

Guided by this abstract view, which provides solid foundations, we develop an extension of logic programming supporting flexible coinduction.   

Syntactically, programs are enriched by \emph{coclauses}, which resemble clauses but have a special meaning used to tune the interpretation of predicates. By adding coclauses, we can obtain a declarative semantics intermediate between the inductive and the coinductive one. Standard (inductive) and {coinductive} logic programming are subsumed by a particular choice of coclauses. 
Correspondingly, operational semantics is a combination of standard SLD resolution \cite{Lloyd87,Apt97} and coSLD resolution as introduced by Simon et al. \citeyear{Simon06,SimonMBG06,SimonBMG07}. 
More precisely, as in coSLD resolution, it keeps trace of already {considered} goals, called \emph{coinductive hypotheses}.\footnote{\EZ{We prefer to mantain this terminology, inherited from coSLD resolution, even though not corresponding to the proof theoretic sense.}}
However, when a goal unifying with a coinductive hypothesis is found, rather than being considered successful as in coSLD resolution, its standard SLD resolution is triggered in the program where also coclauses are considered. Our main result is that such operational semantics is \emph{sound and complete} with respect to the declarative one restricted to regular \mbox{proof trees.}\EZComm{confronto con altre prove di completezza? da discutere cosa dire di Simon; qui non restringiamo la base di Herbrand}

An important additional result is that the operational semantics is not incidental, but, as the declarative semantics, turns out to correspond to a precise notion on the inference system denoted by the logic program.
Indeed, as detailed in a companion paper of Dagnino \citeyear{Dagnino20}, given an inference system, we can always construct another one, with judgments enriched by circular hypotheses, which, interpreted inductively, is equivalent to the \emph{regular} interpretation of the original inference system. In other words, there is a canonical way to derive a (semi-)algorithm to show that a judgment has a regular proof tree, and our operational semantics corresponds to this algorithm.
  This more abstract view supports the reliability of the approach, and, indeed, the proof of equivalence with declarative semantics can be nicely done in a modular way, that is, by relying on a general result proved by Dagnino \citeyear{Dagnino20}.

After basic notions in \cref{sect:inf-sys-lp}, in \cref{sect:coclauses} we introduce logic programs with coclauses and their declarative semantics, and in \cref{sect:lp-opsem} the operational semantics. We provide significant examples in \cref{sect:examples}, the results in \cref{sect:correctness}, related work and conclusive remarks in \cref{sect:related}.

\section{{Logic programs as inference systems}} \label{sect:inf-sys-lp} 

We recall basic concepts about inference systems \cite{Aczel77}, and present (standard inductive and coinductive) logic programming \cite{Lloyd87,Apt97,Simon06,SimonMBG06,SimonBMG07} 
as a particular  instance of this general semantic framework.

\paragraph{Inference systems}
Assume a set $\universe$ called \emph{universe}  whose elements $\judg$ are called \emph{judgements}. 
An \emph{inference system} $\is$ is a set of \emph{(inference) rules}, which are pairs $\RulePair{\prem}{\conclu}$, also written $\Rule{\prem}{\conclu}$, with $\prem \subseteq \universe$ set of \emph{premises}, and $\conclu \in \universe$ \emph{conclusion}. 
We assume inference systems to be \emph{finitary}, that is, rules have a finite set of premises. 
A \emph{proof tree} (a.k.a. \emph{derivation}) in $\is$  is a tree with nodes (labelled) in $\universe$ such that, for each $\judg$ with set of children $\prem$, there is a rule $\RulePair{\prem}{\judg}$ in $\is$. 
{A proof tree for $\judg$ is a proof tree with root $\judg$.}
The \emph{inference operator} $\fun{\InfOp{\is}}{\wp(\universe)}{\wp(\universe)}$ is defined by: 
$$\InfOp{\is}(X) = \{ \judg \in \universe \mid \RulePair{\prem}{\judg} \in \is\ {\mbox{for some}\ \prem \subseteq X}\} $$
A set $X\subseteq \universe$ is \emph{closed} if $\InfOp{\is}(X) \subseteq X$, \emph{consistent} if $X\subseteq \InfOp{\is}(X)$, a \emph{fixed point} if $X = \InfOp{\is}(X)$. 

An \emph{interpretation} of an inference system $\is$ is a set of judgements, that is, a subset of the universe $\universe$. 
%There are two main approaches to define interpretations of an inference system: the model-theoretic and the proof-theoretic one. 
The two standard interpretations, the inductive and the coinductive one, can be defined in {either model-theoretic or} proof-theoretic terms \cite{LeroyG09}.
\begin{itemize}
\item 
The \emph{inductive interpretation} $\Ind{\is}$ is {the intersection of all closed sets, that is, the least closed set}
or,  equivalently, the set of judgements with a finite proof tree.  
\item 
The \emph{coinductive interpretation} $\CoInd{\is}$ is {the union of all consistent sets, that is, the greatest consistent set,}
or,  equivalently,  the set of judgements with an arbitrary (finite or not) proof tree. 
\end{itemize}
By the fixed point theorem \cite{Tarski55}, both $\Ind{\is}$ and $\CoInd{\is}$ are fixed points of $\InfOp{\is}$, the least and the greatest one{, respectively}. 
We will write $\validInd{\is}{\judg}$ for $\judg \in \Ind{\is}$ and $\validCo{\is}{\judg}$ for $\judg \in \CoInd{\is}$. 

\paragraph{Logic programming}
Assume a \emph{first order signature} $\Triple{\PSet}{\FSet}{\VarSet}$ with $\PSet$ set of \emph{predicate symbols} $\psym$, $\FSet$ set of \emph{function symbols} $\fsym$, and $\VarSet$ countably infinite set of \emph{variable symbols} $\lpX$  (\emph{variables} for short). 
Each symbol comes with its \emph{arity}, a natural number denoting the number of arguments. 
Variables have arity $0$. 
A function symbol with arity $0$ is a \emph{constant}. 

\emph{Terms} $t$, $s$, $r$ are (possibly infinite) trees with nodes labeled by function or variable symbols, where the number of children of a node is the symbol arity\footnote{For a more formal definition based on paths see, e.g., \EZ{the work of Ancona and Dovier \citeyear{AnconaD15}}.}. \emph{Atoms} $\Atm$, $\bAtm$, $\cAtm$ are (possibly infinite) trees with the root labeled by a predicate symbol and other nodes by function or variable symbols, again accordingly with the arity. 
Terms and atoms are \emph{ground} if they do not contain variables, and  \emph{finite} (or \emph{syntactic}) if they are finite trees. 
\emph{(Definite) clauses} have shape $\clause{\Atm}{\bAtm_1,\ldots,\bAtm_n}$ with $n\ge 0$, $\Atm$, $\bAtm_1$, \ldots, $\bAtm_n$ finite atoms. 
A clause where $n=0$ is called a \emph{fact}.
A \emph{(definite) logic program} $\lpProg$ is a finite set of clauses. 

\emph{Substitutions} $\sbt,\asbt$ are partial maps from variables to terms with a finite domain.
We write $\appSubst{t}{\sbt}$ for the application of $\sbt$ to a term $t$, call $\appSubst{t}{\sbt}$ an \emph{instance} of $t$, and analogously for atoms{, set of atoms,} and clauses. 
A substitution $\sbt$ is \emph{ground} if, for all $\lpX\in\dom(\sbt)$, $\sbt(\lpX)$ is ground, \emph{syntactic} if, for all $\lpX\in\dom(\sbt)$, $\sbt(\lpX)$ is  a finite (syntactic) term. 

{In order to see a logic program $\lpProg$ as an inference system, we fix as universe the \emph{complete Herbrand base} $\coHB$, that is, the set of all (finite and infinite) ground atoms\footnote{Traditionally \cite{Lloyd87}, the inductive declarative semantics is restricted to finite atoms. 
We define also the inductive semantics on the complete Herbrand base in order to work in a uniform context.}.  Then, $\lpProg$ can be seen as a set of \emph{meta-rules} defining an inference system $\Ground{\lpProg}$ on $\coHB$. That is, $\Ground{\lpProg}$ is the set of ground instances of clauses in $\lpProg$, where $\clause{\Atm}{\bAtm_1,\ldots,\bAtm_n}$ is seen as an inference rule $\RulePair{\{\bAtm_1,\ldots,\bAtm_n\}}{\Atm}$.} 
{In this way, typical notions related to declarative semantics of logic programs turn out to be instances of analogous notions for inference systems. Notably, the
%The \emph{declarative semantics} of a logic program describes its meaning in an abstract way, as the set of ground atoms which are defined to be true by the program, in a sense to be made precise depending on the kind of declarative semantics we choose.
%
%To define declarative semantics, we first need to define the set of  atoms we are interested in. 
%The \emph{complete Herbrand Universe} $\coHU$ \cite{Lloyd87}
%is the set of all (finite and infinite) ground terms.  
%The \emph{complete Herbrand base} $\coHB$ is the set of all (finite and infinite) ground atoms.
(one step) inference operator associated to a program \mbox{$\fun{\TOp{\lpProg}}{\wp(\coHB)}{\wp(\coHB)}$}, defined by:
\[
\TOp{\lpProg}(I)  = \{\Atm\in\coHB \mid (\clause{\Atm}{\bAtm_1,\dots,\bAtm_n}) \in \Ground{\lpProg},  \{\bAtm_1,\dots,\bAtm_n\} \subseteq I\} 
\] is exactly $\InfOp{\Ground{\lpProg}}$. }
%where $\Ground{\lpProg}$ is the set of all ground instances of clauses in $\lpProg$.
%This analogy is much stronger: $\Ground{\lpProg}$ is basically an inference system on $\coHB$ and $\TOp{\lpProg}$ coincides with $\InfOp{\Ground{\lpProg}}$. 
%Indeed, most definitions concerning declarative semantics of logic programs, we report below,  are a rephrasing of analogous concepts for inference systems. 
An \emph{interpretation} (a set $I\subseteq\coHB$) is a \emph{model} of a program $\lpProg$ if $\TOp{\lpProg}(I) \subseteq I$, that is, it is closed with respect to $\Ground{\lpProg}$. 
Dually, an interpretation $I$  is a \emph{comodel} of a program $\lpProg$ if $I\subseteq\TOp{\lpProg}(I)$, that is, it is consistent with respect to $\Ground{\lpProg}$. 
Then, the inductive declarative semantics
 of $\lpProg$ is the least model of $\lpProg$ and the coinductive declarative semantics\footnote{\EZ{Introduced \citet{Simon06,SimonMBG06}} to properly deal with predicates on infinite terms.} is the greatest comodel of $\lpProg$. 
These two semantics coincide with the inductive and coinductive interpretations of $\Ground{\lpProg}$, hence
%, abusing a bit the notation, 
we denote them by $\Ind{\lpProg}$ and $\CoInd{\lpProg}$, respectively.

\section{Coclauses} \label{sect:coclauses} 
{We introduce logic programs with coclauses and define their declarative semantics. Consider again the example in~\refToFigure{intro-example}
%% \begin{lstlisting}
%% all_pos([]) $\clSep$ 
%% all_pos([N|L]) $\clSep$ N>0, all_pos(L).
%% member(X,[X|_]) $\clSep$
%% member(X,[Y|L]) $\clSep$X\=Y, member(X,L).
%% maxElem([N],N) $\clSep$
%% maxElem([N|L],M) $\clSep$ maxElem(L,M1), M is max(N,M1).
%% \end{lstlisting}
{where, as discussed in the Introduction, each predicate needed a different kind of interpretation.}

As shown in the previous section, the above logic program can be seen as an inference system.
In this context, \EZ{\emph{flexible coinduction} has been proposed \citet{Dagnino17,AnconaDZ@esop17,Dagnino19},} a generalisation able to overcome these limitations. 
The key notion  are \emph{corules}, special inference rules used to control the  semantics of an inference system. 
More precisely, a \emph{generalized inference system}, or \emph{inference system with corules}, is a pair of inference systems $\Pair{\is}{\cois}$, where the elements of $\cois$ are called corules. 
The {interpretation} of $\Pair{\is}{\cois}$, denoted by $\FlexCo{\is}{\cois}$, is constructed in two steps. 
\begin{itemize}
\item first, we take the inductive   interpretation of the union $\is\cup\cois$, that is, $\Ind{\is\cup\cois}$, 
\item then, 
%the coinductive interpretation of $\is$ restricted to rules with conclusion in $\Ind{\is\cup\cois}$ (equivalently,
the union of all sets, consistent with respect to $\is$, \EZ{which are subsets of} $\Ind{\is\cup\cois}$, that is,  the largest consistent \EZ{subset of} $\Ind{\is\cup\cois}$.
\end{itemize}
In proof-theoretic terms, $\FlexCo{\is}{\cois}$ is the set of judgements with an arbitrary (finite or not) proof tree in $\is$, whose nodes all have a finite proof tree in $\is {\cup} \cois$. 
Essentially, by corules we filter out some, undesired, infinite proof trees. 
\EZ{Dagnino \citeyear{Dagnino19}} proved that $\FlexCo{\is}{\cois}$ is a fixed point \mbox{of $\InfOp{\is}$}. 

To introduce flexible coinduction in logic programming, first we slightly extend the syntax {by introducing \emph{(definite) coclauses}}, written $\coclause{\Atm}{\bAtm_1,\ldots,\bAtm_n}$, 
where $\Atm$, $\bAtm_1$, \ldots, $\bAtm_n$ are finite atoms. A coclause where n = 0 is called a \emph{cofact}.
Coclauses  syntactically resemble clauses,  but are used in a special way, like corules for inference systems. 
More precisely, we have the following definition:
\begin{definition} \label{def:coclauses}
A \emph{logic program with coclauses} is a pair $\Pair{\lpProg}{\colpProg}$ where $\lpProg$ and $\colpProg$ are sets of clauses.  
{Its \emph{declarative semantics}}, denoted by $\FlexCo{\lpProg}{\colpProg}$,  is the largest comodel of $\lpProg$  \EZ{which is a subset of} $\Ind{\lpProg\cup\colpProg}$. 
\end{definition}
In other words, the declarative semantics of $\Pair{\lpProg}{\colpProg}$ is the coinductive semantics of $\lpProg$ where, however, clauses are instantiated only on elements of $\Ind{\lpProg\cup\colpProg}$. {Note that this is the interpretation of the generalized inference system $\Pair{\Ground{\lpProg}}{\Ground{\colpProg}}$. 

Below is the version of the example in~\refToFigure{intro-example}, equipped with coclauses.
\figrule
\programmath
\[
\begin{array}{lcl}
all\_pos([~]) & \clSep& \\
all\_pos([N|L]) &\clSep& N>0,~all\_pos(L).\\
all\_pos(\underline{~~}) &\coclSep& \\
member(X,[X|\underline{~~}]) &\clSep&\\
member(X,[Y|L]) &\clSep&X\neq Y,~ member(X,L).\\
maxElem([N],N) &\clSep&\\
maxElem([N|L],M) &\clSep& maxElem(L,M_1),~ M \textrm{ is } \max(N,M_1).\\
maxElem([N|\underline{~~}],N) &\coclSep&
\end{array}
\]
\unprogrammath
\figrule
In this way, all the predicate {definitions are correct w.r.t. the expected semantics}: 
\begin{itemize}
\item $all\_pos$ has coinductive semantics, as the coclause allows any infinite proof trees.
\item $member$  has inductive semantics, as without coclauses no infinite proof tree is allowed. 
\item $maxElem$ has an intermediate semantics, as the coclause allows only infinite proof trees where nodes have shape $maxElem(l,x)$ with $x$ an element of $l$. 
\end{itemize}
As the example shows, coclauses allow the programmer to mix inductive and coinductive predicates, and to {correctly define predicates which are neither inductive, nor purely coinductive}. For this reason we call this paradigm \emph{flexible coinductive logic programming.}
%{Note that $\FlexCo{\lpProg}{\colpProg}$ is different from $\CoInd{\lpProg}\cap\Ind{\Extended{\lpProg}{\colpProg}}$. For instance, let $\lpProg$ be the program}
%\begin{lstlisting}
%p(0) $\clSep$ p(0), p(1)
%p(1) $\clSep$ p(0), p(1)
%\end{lstlisting}
%{and $\colpProg$ be the singleton set consisting of the co-fact}
%\begin{lstlisting}
%.(p(0))
%\end{lstlisting}
%{Then, $\CoInd{\lpProg}=\{\texttt{p(0)}, \texttt{p(1)}\}$, and $\Ind{\Extended{\lpProg}{\colpProg}}=\{\texttt{p(0)}\}$. Hence the intersection is $\{\texttt{p(0)}\}$, whereas $\FlexCo{\lpProg}{\colpProg}=\emptyset$.}
%
%From results proved by \citet{AnconaDZ@esop17,Dagnino19} in the more general framework of inference systems, we get that  $\FlexCo{\lpProg}{\colpProg}$ is a fixed point of the operator $\TOp{\lpProg}$ which is neither the greatest, nor the least one, hence it is both a model and a comodel of $\lpProg$. 
Note that, \EZ{as shown  for inference systems with corules \cite{Dagnino17,AnconaDZ@esop17,Dagnino19}, }
inductive and coinductive semantics are particular cases. Indeed, they can be recovered by special choices of coclauses:
\FD{the former is obtained when no coclause is specified, the latter when each atom in $\coHB$ is an instance of the head of a cofact. }

\section{Big-step operational semantics}\label{sect:lp-opsem}

In this section we define an operational counterpart of the {declarative} semantics {of logic programs with coclauses} introduced in the previous section. 

As in standard {coLP} \cite{Simon06,SimonMBG06,SimonBMG07}, to represent possibly infinite terms we use finite sets of equations between finite (syntactic) terms. {For instance, the equation $\lpeq{\texttt{L}}{\texttt{[1,2|L]}}$ represents the infinite list \texttt{[1,2,1,2,\ldots]}.}
%, relying on \emph{unification}. %citazione 
%More precisely, in this way we can only model \emph{regular} terms and atoms, that is, 
%terms having a finite number of different subterms, see, e.g., \cite{Courcelle83,AdamekMV06} for the details.
%This restriction to regular terms and atoms is adopted also in standard coinductive logic programming . %menzionare altri approcci?

Since the declarative semantics  {of logic programs with coclauses} is a combination of inductive and coinductive semantics, 
their operational semantics combines standard SLD resolution \cite{Lloyd87,Apt97} and coSLD resolution \cite{Simon06,SimonMBG06,SimonBMG07}. 
It is presented, rather than in the traditional small-step style, in big-step style, as introduced \EZ{by Ancona and Dovier \citeyear{AnconaD15}}.
This style turns out to be simpler since coinductive hypotheses  (see below) can be kept local. 
Moreover, it naturally leads to an interpreter, and makes it simpler to prove its correctness with respect to declarative semantics (see the next section). 
%For a proof of equivalence of big-step and small-step coSLD resolution see \cite{AnconaD15}.

We introduce some notations. {First of all, in this section we assume atoms and terms to be finite (syntactic).}
{A \emph{goal} is a pair $\extG{\Goal}{\EqSet}$, where $\Goal$ is a finite sequence of atoms.  A goal is \emph{empty} if $\Goal$ is the empty sequence, denoted $\EList$. }
An \emph{equation} has shape $\lpeq{s}{t}$  where $s$ and $t$ are terms, and we denote by $\EqSet$ a finite set of equations. 
%An \emph{extended goal} is a pair $\extG{\Goal}{\EqSet}$.
% where $\Goal$ is a goal and $\EqSet$ is a finite set of equations. 

Intuitively, {a goal} can be seen as a query to the program and the operational semantics has to compute answers (a.k.a. solutions) to such a query. 
More in detail, the operational semantics, given {a goal} $\extG{\Goal}{\EqSet_1}$, provides another set of equations $\EqSet_2$, which represents answers to the {goal}. 
For instance, given the previous program, for the {goal} $\extG{\texttt{maxElem(L,M)}}{\{\lpeq{\texttt{L}}{\texttt{[1,2|L]}}\}}$, the operational semantics {returns} the set of equations $\{\lpeq{\texttt{L}}{\texttt{[1,2|L]}}, \lpeq{{\texttt{M}}}{\texttt{2}}\}$. 

{The judgment of the operational semantics has shape 
$$\colp{\lpProg}{\colpProg}{\cohyp}{\Goal}{\EqSet_1}{\EqSet_2}$$
meaning that resolution of $\extG{\Goal}{\EqSet_1}$, under the \emph{coinductive hypotheses}  $\cohyp$ \cite{SimonMBG06}, succeeds  in $\Pair{\lpProg}{\colpProg}$, producing a set of equations $\EqSet_2$.
Set $\GVar{t}$ the set of variables in a term, and analogously for atoms, set of atoms, and equations. \EZ{We assume $\GVar{\cohyp}\subseteq\GVar{\EqSet_1}$, modelling the intuition that $\cohyp$ keeps track of already {considered} atoms.}
This condition  holds for the initial judgement, and is preserved by rules in \refToFigure{opsem}, hence it is not restrictive.  
Resolution starts with no coinductive hypotheses, that is, the top-level judgment has shape $\colp{\lpProg}{\colpProg}{\emptyset}{\Goal}{\EqSet_1}{\EqSet_2}$. }

{The operational semantics has two flavours:
\begin{itemize}
\item If there are no corules ($\colpProg=\emptyset$), then the judgment models standard SLD resolution, hence the set of coinductive hypotheses is not significant. 
\item Otherwise, the judgment models \emph{flexible coSLD resolution}, which follows the same schema of coSLD resolution, in the sense that it keeps track in $\cohyp$ of the already {considered} atoms. 
However, when {an atom $\Atm$ {in the current goal} unifies with a coinductive hypothesis}, rather than just considering $\Atm$ successful as  in coSLD resolution, standard SLD resolution of $\Atm$ is triggered in the program $\lpProg\cup\colpProg$,  that is, also coclauses can be used. 
\end{itemize}}

%\begin{itemize}
%
%\item $\colpsem{\cohyp}{\Goal}{\EqSet_1}{\EqSet_2}$, meaning that resolution of the {goal} $\extG{\Goal}{\EqSet_1}$, under the \emph{coinductive hypotheses} $\cohyp$, succeeds  in $\Pair{\lpProg}{\colpProg}$, producing a set of equations $\EqSet_2$.
%\item $\lpsem{\Goal}{\EqSet_1}{\EqSet_2}$, meaning that (standard SLD) resolution of the {goal} $\extG{\Goal}{\EqSet_1}$ succeeds  in $\lpProg\cup\colpProg$, producing a set of equations $\EqSet_2$.
%\end{itemize}
\DAComm{commento sulle scelte che le ipotesi coinduttive sono un insieme, come succede per i programmi}
%The top-level judgement is $\colpsem{{\emptycohyp}}{\Goal}{\EqSet_1}{\EqSet_2}$, since resolution starts with no coinductive hypotheses. 

{The judgement is} inductively defined by the rules in \refToFigure{opsem}, which rely on some auxiliary {(standard)} notions. 
A \emph{solution} of an equation $\lpeq{s}{t}$ is a \emph{unifier} \DAComm{forse menzionerei risultato generale, con citazione?, rappresentazione m.g.u. tramite insieme di equazioni e che l'uso di equazioni semplifica la semantica} of $t$ and $s$, that is, a substitution $\sbt$ such that $\appSubst{s}{\sbt} = \appSubst{t}{\sbt}$.
A solution of a finite set of equations $\EqSet$ is a solution of all the equations in $\EqSet$ and $\EqSet$ is \emph{solvable} if there exists a solution of $\EqSet$. 
Two atoms $\Atm$ and $\bAtm$ are \emph{unifiable} in a set of equations $\EqSet$, {written} $\unifiable{\EqSet}{\Atm}{\bAtm}$, if $\Atm = \psym(s_1,\ldots,s_n)$, $\bAtm = \psym(t_1,\ldots,t_n)$ and $\EqSet \cup \{ \lpeq{s_1}{t_1},\ldots, \lpeq{s_n}{t_n} \}$ is solvable, and we denote by $\UnEq{\Atm}{\bAtm}$ the set $\{ \lpeq{s_1}{t_1},\ldots, \lpeq{s_n}{t_n} \}$. 
\begin{figure*}[t]
\[
\begin{array}{c}
\MetaRule{empty}
{  }
{\colp{\lpProg}{\colpProg}{\cohyp}{\EList}{\EqSet}{\EqSet}}
{}
\BigSpace 
\MetaRule{co-hyp}
{  
  \begin{array}{l}
    \colp{\lpProg\cup\colpProg}{\emptyset}{\emptyset}{\Atm}{\EqSet_1\cup \UnEq{\Atm}{\bAtm}}{\EqSet_2}\\
    \colp{\lpProg}{\colpProg}{\cohyp}{\Goal_1,\Goal_2}{\EqSet_2}{\EqSet_3}
  \end{array}
}
{ \colp{\lpProg}{\colpProg}{\cohyp}{\Goal_1,\Atm,\Goal_2}{\EqSet_1}{\EqSet_3} }
{
\bAtm \in \cohyp \\
{\unifiable{\EqSet_1}{\Atm}{\bAtm}} \\
\colpProg\ne\emptyset
}
\\[4ex]
\MetaRule{step}
{  
  \begin{array}{l}
    \colp{\lpProg}{\colpProg}{\cohyp\cup \{\Atm\}}{\cAtm_1,\dots,\cAtm_n}{\EqSet_1\cup \UnEq{\Atm}{\bAtm}}{\EqSet_2} \\
    \colp{\lpProg}{\colpProg}{\cohyp}{\Goal_1,\Goal_2}{\EqSet_2}{\EqSet_3} 
  \end{array}
}
{ \colp{\lpProg}{\colpProg}{\cohyp}{\Goal_1,\Atm,\Goal_2}{\EqSet_1}{\EqSet_3} }
{
\mbox{$\sbt$ fresh renaming} \\
\clause{\appSubst{\bAtm}{\sbt}}{\appSubst{\cAtm_1}{\sbt},\ldots,\appSubst{\cAtm_n}{\sbt}} \in \lpProg \\
{\unifiable{\EqSet_1}{\Atm}{\bAtm}}
}
\end{array}
\]
\caption{Big-step operational semantics}\label{fig:opsem}
\end{figure*}

Rule \rn{empty} states that the  resolution of {an} empty goal succeeds. 
In rule \rn{step}, an atom $\Atm$ {to be resolved is selected}, and a clause of the program is chosen such that $\Atm$ unifies with the head of the clause in the current set of equations.  
Then, resolution of the original goal succeeds if both the body of the selected clause and the remaining atoms are resolved{, enriching the set of equations correspondingly.}
{As customary,} the selected clause is renamed using fresh variables, to avoid variable clashes in the set of equations obtained after unification. 
{Note that, in the resolution of the body of the clause, the selected atom is added to the current set of coinductive hypotheses. This is not relevant for standard SLD resolution ($\colpProg=\emptyset$). However, if $\colpProg\neq\emptyset$, this allows rule \rn{co-hyp} to} handle the case when
{an atom $\Atm$ that has to be resolved unifies with a coinductive hypothesis  in the current} set of equations. 
In this case, standard SLD resolution of such atom in the program $\lpProg\cup\colpProg$ is triggered,  and resolution of the original goal succeeds if both such standard SLD resolution of the selected atom and resolution of the remaining goal succeed. 

\newcommand{\eqL}{\mathit{eq}_\texttt{L}}
\newcommand{\eqs}{\EZ{\mathit{eqs}}}
\newcommand{\lpeqnarrow}[2]{#1{\eqcirc}#2} 

In \refToFigure{example} we show an example of resolution. 
We use the shorter syntax \texttt{=max}, abbreviate by $\eqL$ the equation $\lpeq{\texttt{L}}{\texttt{[1,2|L]}}$,  \EZ{by $\eqs$ the equations $\lpeqnarrow{M3}{2},\lpeqnarrow{M2}{2}$}, by \texttt{mE} the predicate \texttt{maxElem}, and by \refToRule{s}, \refToRule{c} the rules \refToRule{step} and \refToRule{co-hyp}, respectively.  When applying rule \refToRule{step}, we also indicate the clause/coclause which has been used: we write 1,2,3 for the two clauses and the coclause for the \texttt{maxElem} predicate (the first clause is never used in this example). Finally, to keep the example readable and focus on key aspects, we make some simplifications: notably, \refToRule{max} stands for an omitted proof tree solving atoms of shape \texttt{\_ is max(\_,\_)}; morever, equations between lists are implicitly applied.

\begin{figure}
\begin{footnotesize}
$
\MetaRule{s-2}
{    
\MetaRule{s-2}{
\MetaRule{c}
{
\MetaRule{s-2}{
\MetaRule{s-3]}{\MetaRule{max}{}{}{}}{\colpNarrow{\{1,2,3\}}{\emptyset}{\emptyset}{\texttt{mE([2|L],M3)},\texttt{M2=max(1,M3)}}{\eqL,\lpeqnarrow{M2}{M}}{\eqL,\lpeqnarrow{M2}{M},\eqs}}{}
}{
\colpNarrow{\{1,2,3\}}{\emptyset}{\emptyset}{\texttt{mE(L,M2)}}{\eqL,\lpeqnarrow{M2}{M}}{\eqL,\lpeqnarrow{M2}{M},\eqs}
}{}
\MetaRule{max}{}{}{}}{
\colpNarrow{\{1,2\}}{3}{\texttt{mE(L,M)}}{\texttt{mE(L,M2)},\texttt{M1=max(2,M2)}}{\eqL}{\eqL,\lpeqnarrow{M2}{M},\eqs,\lpeqnarrow{M1}{2}}
}{}
}
{\colpNarrow{\{1,2\}}{3}{\texttt{mE(L,M)}}{\texttt{mE([2|L],M1)},\texttt{M=max(1,M1)}}{\eqL}{\eqL,\lpeqnarrow{M2}{M},\eqs,\lpeqnarrow{M1}{2}}\Space\MetaRule{max}{}{}{}
}{}
}
{ \colpNarrow{\{1,2\}}{3}{\emptyset}{\texttt{mE(L,M)}}{\eqL}{\eqL,\lpeqnarrow{M2}{M},\eqs,\lpeqnarrow{M1}{2},\lpeqnarrow{M}{2}} }
{}
$
\end{footnotesize}
\caption{Example of resolution}\label{fig:example}
\end{figure}

As final remark, note that flexible coSLD resolution nicely subsumes both SLD and coSLD. The former, as already said, is obtained when the set of coclauses is empty, that is, the program is inductive. The latter is obtained when, for all predicate $\psym$ of arity $n$, we have a cofact  $\psym(\lpX_1,\ldots,\lpX_n)$.

\section{Examples}\label{sect:examples}
In this section we discuss some more sophisticated examples. 

\paragraph{$\infty$-regular expressions:}
We define \emph{$\infty$-regular expressions} on an alphabet $\Sigma$, a variant of the formalism defined \EZ{by L\"{o}ding and Tollk\"{o}tter} \citeyear{LodingT16}} for denoting languages of finite and infinite words{, the latter also called $\omega$-words,} as follows: 
\[ r ::= \emptyset \mid  \epsilon \mid a \mid r_1\cdot r_2 \mid r_1 + r_2 \mid r^\star \mid r^\omega \]
where $a\in\Sigma$. 
The syntax of standard regular expressions is extended by $r^\omega$, denoting the $\omega$-power of the language $A_r$ denoted by $r$. That is, the {set of words obtained by concatenating infinitely many times words in $A_r$.}
In this way, we can denote also languages containing infinite words. 

In~\refToFigure{regex-example} we define the predicate $match$, such that $match(W,R)$ holds if the finite or infinite word $W$, implemented as a list, belongs to the language denoted by $R$. 
For simplicity, we consider words over the alphabet $\{0,1\}$. 
%The definition is the following: 

\begin{figure}
\figrule
\programmath
\[
\begin{array}{lcl}
concat([~],W,W) &\clSep& \\ 
concat([B|W_1],W_2,[B|W_3]) &\clSep& concat(W_1,W_2,W_3). \\  
concat(W_1,W_2,W_1) &\coclSep& \\ 
match([~],eps) &\clSep&\\
match([0],0) &\clSep&\\
match([1],1) &\clSep& \\
match(W,cat(R_1,R_2)) &\clSep& match(W_1,R_1),~match(W_2,R_2),~concat(W_1,W_2,W).\\
match(W,plus(R_1,R_2)) &\clSep& match(W,R_1).\\
match(W,plus(R_1,R_2)) &\clSep& match(W,R_2).\\
match(W,star(R)) &\clSep& match\_star(N,W,R). \\
match([~],omega(R)) &\clSep& match([~],R). \\
match([B|W],omega(R)) 
  &\clSep& match([B|W_1],R),~match(W_2,omega(R)),~concat(W_1,W_2,W).\\
match(W,omega(R)) &\coclSep& \\
match\_star(0,[~],R) &\clSep& \\
match\_star(s(N),W,R) 
  &\clSep& match(W_1,R),~match\_star(N,W_2,R),~concat(W_1,W_2,W).
\end{array}
\]
\unprogrammath
\caption{A logic program for $\infty$-regular expression recognition.}\label{fig:regex-example}
\figrule
\end{figure}
Concatenation of words needs to be defined coinductively, to correctly work on infinite words as well. 
Note that, when $w_1$ is infinite, $w_1w_2$ is equal to $w_1$.

On operators of regular expressions, $match$ can be defined in the standard way (no coclauses). 
{In particular, the} definition for expressions of shape $r^\star$ follows the explicit  definition of the $\star$-closure of a language:
given a language $L$, a word $w$ belongs to $L^\star$ iff it can be decomposed as $w_1\ldots w_n$, for some $n\ge 0$, where $n=0$ means $w$ is empty, and $w_i\in L$, for all $i\in 1..n$. 
This condition is checked by the auxiliary predicate $match\_star$. 

To define when a word $w$ matches $r^\omega$ we have two cases.
If $w$ is empty, then it is enough to check that the empty word matches $r$, as expressed by the first clause, because concatenating infinitely many times the empty word we get again the empty word.
Otherwise, we have to decompose $w$ as $w_1w_2$ where $w_1$ is not empty and matches $r$ and $w_2$ matches $r^\omega$ as well, 
as formally expressed by the second clause,  
To propertly handle infinite words, we need to concatenate infinitely many non-empty words, hence we need to apply the second clause infinitely many times. 
The coclause allows all such infinite derivations. 

\paragraph{An LTL fragment:}
In~\refToFigure{ltl-example} we define the predicate $sat$ s.t. $sat(w,\varphi)$  
succeeds iff the $\omega$-word $w$ over the alphabet $\{0,1\}$ satisfies the formula $\varphi$ of the fragment of the Linear Temporal Logic
with the temporal operators $until$ ($\mathbf{U}$) and $always$ ($\mathbf{G}$) and the predicate $zero$
and its negation\footnote{Predicates $true$ and $false$ could be easily defined as well.} $one$.
\begin{figure}
\figrule
\programmath
\[
\begin{array}{lcl}
sat\_exists(0,W,Ph) &\clSep& sat(W,Ph). \\
sat\_exists(s(N),[B|W],Ph) &\clSep& sat\_exists(N,W,Ph). \\ 
sat\_all(0,W,Ph) &\clSep& \\
sat\_all(s(N),[B|W],Ph) &\clSep& sat([B|W],Ph),~sat\_all(N,W,Ph). \\ 
sat([0|W],zero) &\clSep&\\
sat([1|W],one) &\clSep&\\
sat([B|W],always(Ph)) &\clSep& sat([B|W],Ph),~sat(W,always(Ph)). \\ 
sat(W,always(Ph)) &\coclSep& \\
sat([B|W],until(Ph_1,Ph_2)) 
  &\clSep& sat\_exists(N,[B|W],Ph_2),~sat\_all(N,[B|W],Ph_1). 
\end{array}
\]
\unprogrammath
\caption{A logic program for satisfaction of an LTL fragment:
  $sat\_exists(N,W,Ph)$ succeeds iff suffix at $N$ of $\omega$-word $W$ satisfies $Ph$,
  $sat\_all(N,W,Ph)$ succeeds iff all suffixes of word $W$ at index $< N$ satisfy $Ph$,
  $sat(W,Ph)$ succeeds iff $\omega$-word $W$ satisfies $Ph$.
}\label{fig:ltl-example}
\figrule
\end{figure}
%% Other option:
%% \begin{lstlisting}[basicstyle=\ttfamily\small]
%% %%% suffix(N,W,S): S is the suffix of word W at N 
%% suffix(0,W,W) $\clSep$
%% suffix(s(N),[B|W],S) $\clSep$ suffix(N,W,S)  
%% %%% sat_all(N,W,Ph): all suffixes of word W at index < N satisfy Ph
%% sat_all(0,W,Ph) $\clSep$
%% sat_all(s(N),[B|W],Ph) $\clSep$ sat([B|W],Ph), sat_all(N,W,Ph) 
%% %%% sat(W,Ph): omega-word W satisfies Ph
%% sat([0|W],zero) $\clSep$
%% sat([1|W],one) $\clSep$
%% sat([B|W],always(Ph)) $\clSep$ sat([B|W],Ph), sat(W,always(Ph)) 
%% sat(W,always(Ph)) $\coclSep$ 
%% sat([B|W],until(Ph1,Ph2)) $\clSep$
%%     suffix(N,[B|W],S), sat_all(N,[B|W],Ph1), sat(S,Ph2) 
%% \end{lstlisting}
%% more compact def. without concat
%% sat_all([],W,Phi1,Phi2) $\clSep$ sat(W,Phi2)
%% sat_all([B|P],[B|W],Phi1,Phi2) $\clSep$ sat([B|W],Phi1), until(P,W,Phi1,Phi2)
%% sat(W,until(Phi1,Phi2)) :- sat_all(P,W,Phi1,Phi2)
Since $sat([B|W],always(Ph))$ succeeds iff all \textbf{infinite} suffixes of $[B|W]$ satisfy formula $Ph$,
the coinductive interpretation has to be considered, hence a coclause is needed; for instance, $sat(W_0,always(zero))$, with $W_0=[0|W_0]$, succeeds because
the atom $sat(W_0,always(zero))$ in the body of the clause for $always$ unifies\footnote{Actually, in this case
  the atom to be resolved and the coinductive hypothesis are syntactically equal.} with
the coinductive hypothesis $sat(W_0,always(zero))$ (see rule \rn{co-hyp} in \cref{fig:opsem})
and the coclause allows it to succeed w.r.t. standard SLD resolution (indeed, atom $sat(W_0,zero)$ succeeds, thanks to
the first fact in the logic program). 

Differently to $always$, the interpretation of $until$ has to be inductive because
$until(\varphi_1,\varphi_2)$ succeeds iff $\varphi_2$ is satisfied after a \textbf{finite} number of steps; for this reason,
no coclause is given for this operator; for instance,
$sat([1,1,0|W_1],until(one,zero))$ with $W_1=[1|W_1]$ succeeds w.r.t. standard SLD resolution, while
$sat(W_1,until(one,zero))$, $sat(W_1,until(always(one),zero))$,
and $sat(W_1,until(always(one),always(zero)))$ fail.
The clause for $sat([B|W],until(Ph_1,Ph_2))$ follows the standard definition of satisfaction for the $\mathbf{U}$ operator:
there must exist a suffix of $[B|W]$ at index $N$ satisfying $Ph_2$ ($sat\_exists(N,[B|W],Ph_2)$) s.t.
all suffixes of $[B|W]$ at index less than $N$ satisfy $Ph_1$ ($sat\_all(N,[B|W],Ph_1)$).
%The only difference with the predicate \lstinline{concat} defined in the previous example on $\infty$-regular expressions is that here we need to consider finite prefixes only, hence no corule is given.

An interesting example concerns the goal $sat([1,1|W_0],until(one,always(zero)))$, where
the two temporal operators are mixed together: it succeeds as expected, thanks to the two clauses
for $until$ and the fact that  $sat(W_0,always(zero))$ succeeds, as shown above.

\EZ{Some of the issues faced in this example are also discussed by Gupta et al.\ \citeyear{GuptaSDMMK11}}.

%% LTL satisfaction is an example that shows that in some cases corules which are not simply cofacts \EZComm{gi\`a mostrato in esempio precedente se resta cos\`i, cambiare un po' frase}
%% are needed to correctly define a predicate; for instance, if the corule of the logic program above is replaced with
%% the cofact \lstinline{sat(W,always(Ph))$\coclSep$}, then 
%% the goal \lstinline{sat(W1,until(one,always(zero)))} with \lstinline{W1=[1|W1]} unsoundly succeeds.

\paragraph{Big-step semantics modeling infinite behaviour and observations} 
Defining a big-step operational semantics modelling divergence is a difficult task, especially in presence of observations. 
Ancona et al. \citeNN{AnconaDZ@ecoop18,AnconaDRZ20}
show how corules can be successfully employed to tackle this problem, by \EZ{providing} big-step semantics able to model divergence
for several variations of the lambda-calculus and different kinds of observations.
Following this approach, we present in~\refToFigure{big-step-example} a similar example, but simpler, to keep it shorter: a logic program with coclauses defining the big-step semantics of a toy language to output possibly infinite sequences\footnote{For simplicity we consider only integers, but in fact the definition below allows any term as output.} of integers. 
%We refer to these works for more sophisticated examples. 
Expressions are regular terms generated by the following grammar: 
\[ \e ::= skip \mid \Write{n} \mid seq(\e_1,\e_2) \]  
where $skip$ is the idle expression, 
$\Write{n}$ outputs $n$, and 
$seq(\e_1,\e_2)$ is the sequential composition. 
The semantic judgement has shape $\evalobs{\e}{r}{s}$, represented by the atom $eval(\e,r,s)$, where 
$\e$ is an expression,
$r$ is either $end$ or $div$, for converging or diverging computations, respectively, and 
$s$ is a possibly infinite sequence of integers. 
\begin{figure}
\figrule
\programmath
\[
\begin{array}{lcl}
concat([~],S,S) &\clSep& \\
concat([N|S_1],S_2,[N|S_3]) &\clSep& concat(S_1,S_2,S_3). \\
eval(skip,end,[~]) &\clSep& \\
eval(out(N),end,[N]) &\clSep& \\
eval(seq(E_1,E_2),R,S) &\clSep& eval(E_1,end,S_1),~eval(E_2,R,S_2),~concat(S_1,S_2,S).\\
eval(seq(E_1,E_2),div,S) &\clSep& eval(E_1,div,S). \\
eval(E,div,[~]) &\coclSep& \\
eval(seq(E_1,E_2),div,S) &\coclSep& eval(E_1,end,[N|S_1]),~concat([N|S_1],S_2,S).
\end{array}
\]
\unprogrammath
\caption{A logic program defining a big-step semantics with infinite behaviour and observations.}\label{fig:big-step-example}
\figrule
\end{figure}
Clauses for $concat$ are pretty standard; in this case the definition is purely inductive (hence, no coclause is needed) 
since the left operand of concatenation is always a finite sequence. 
Clauses for $eval$ are rather straightforward, but sequential composition $seq(\e_1,\e_2)$ deserves some comment:
if the evaluation of $\e_1$ converges, then the computation can continue with the evaluation of $\e_2$, otherwise
the overall computation diverges and $\e_2$ is not evaluated. 

As opposite to the previous examples, here we do not need just cofacts, but also a coclause;
both the cofact and the coclause ensure that for infinite derivations only $div$  can be derived.
Furthermore, the cofact handles diverging expressions which produce a finite output sequence, as in
$eval(E,div,[~])$ or in $eval(seq(out(1),E),div,[1])$, with $E=seq(skip,E)$ or $E=seq(E,E)$, while
the coclause deals with diverging expressions with infinite outputs, as in $eval(E,div,S)$ with $E=seq(out(1),E)$ and
$S=[1|S]$. The body of the coclause ensures that the left operand of sequential composition converges, thus ensuring
a correct productive definition.

\section{{Soundness and completeness}} \label{sect:correctness} 

{After formally relating the two approaches, we state soundness of the operational semantics with respect to the declarative one. 
Then, we show that completeness does not hold in general, and define the \emph{regular} version of the declarative semantics. 
Finally, we show that the operational semantics is equivalent to this restricted declarative semantics.  }

\paragraph{Relation between operational and declarative semantics} {As in the standard case, the first step is to bridge the gap between the two approaches: the former computing equations, the latter defining truth of atoms. This can be achieved through the notions of \emph{answers} to a goal.  }

%Assume a logic program with coclauses $\Pair{\lpProg}{\colpProg}$. 
Given a set of equations $\EqSet$, $\gsol{\EqSet}$ is the set of the \emph{solutions} of $\EqSet$, that is, the ground substitutions unifying all the equations in $\EqSet$. 
{Then, $\sbt\in\gsol{\EqSet}$ is an \emph{answer} to $\extG{\Goal}{\EqSet}$ if \mbox{$\GVar{\Goal}\subseteq\dom(\sbt)$}. }

{The judgment $\colp{\lpProg}{\colpProg}{\cohyp}{\Goal}{\EqSet_1}{\EqSet_2}$ described in \cref{sect:lp-opsem} computes a set of answers to the input goal. Indeed, solutions of the output set of equations are solutions of the input set as well, since the following proposition holds}.

%This assertion is correct thanks to the next proposition, ensuring that $\gsol{\EqSet_2}\subseteq\gsol{\EqSet_1}$: 

\begin{proposition}\label{prop:opsem-monotone} 
\begin{enumerate}
\item If $\colp{\lpProg}{\colpProg}{\cohyp}{\Goal}{\EqSet_1}{\EqSet_2}$ then $\EqSet_1\subseteq\EqSet_2$ and $\GVar{\Goal}\subseteq\GVar{\EqSet_2}$.
\item If $\EqSet_1\subseteq\EqSet_2$, then $\gsol{\EqSet_2}\subseteq \gsol{\EqSet_1}$. 
\end{enumerate}
\end{proposition}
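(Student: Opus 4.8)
The plan is to prove the two statements separately. Part (2) is immediate: if $\sbt\in\gsol{\EqSet_2}$, then $\sbt$ unifies every equation in $\EqSet_2$, hence in particular every equation in $\EqSet_1$ (since $\EqSet_1\subseteq\EqSet_2$), so $\sbt\in\gsol{\EqSet_1}$. This needs no induction and only unfolds the definition of $\gsol{\cdot}$.

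For part (1), I would proceed by induction on the derivation of $\colp{\lpProg}{\colpProg}{\cohyp}{\Goal}{\EqSet_1}{\EqSet_2}$, i.e.\ by rule induction on the inductively-defined judgment of \refToFigure{opsem}. There are three cases. In case \rn{empty}, the goal is $\EList$ and $\EqSet_2=\EqSet_1$, so $\EqSet_1\subseteq\EqSet_2$ holds trivially, and $\GVar{\EList}=\emptyset\subseteq\GVar{\EqSet_2}$. In case \rn{step}, the conclusion is $\colp{\lpProg}{\colpProg}{\cohyp}{\Goal_1,\Atm,\Goal_2}{\EqSet_1}{\EqSet_3}$ with two premises: the first derives $\EqSet_2$ from $\EqSet_1\cup\UnEq{\Atm}{\bAtm}$ (resolving the renamed clause body), the second derives $\EqSet_3$ from $\EqSet_2$ (resolving $\Goal_1,\Goal_2$). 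By the induction hypothesis on the first premise, $\EqSet_1\cup\UnEq{\Atm}{\bAtm}\subseteq\EqSet_2$, so in particular $\EqSet_1\subseteq\EqSet_2$; by the induction hypothesis on the second premise, $\EqSet_2\subseteq\EqSet_3$; composing, $\EqSet_1\subseteq\EqSet_3$. For the variable inclusion, the IH on the second premise gives $\GVar{\Goal_1,\Goal_2}\subseteq\GVar{\EqSet_3}$, and the IH on the first gives $\GVar{\cAtm_1,\dots,\cAtm_n}\subseteq\GVar{\EqSet_2}\subseteq\GVar{\EqSet_3}$; it remains to show $\GVar{\Atm}\subseteq\GVar{\EqSet_3}$. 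Since $\unifiable{\EqSet_1}{\Atm}{\bAtm}$ with $\Atm=\psym(s_1,\dots,s_n)$ and $\bAtm=\psym(t_1,\dots,t_n)$, every $s_i$ occurs inside an equation $\lpeq{s_i}{t_i}\in\UnEq{\Atm}{\bAtm}\subseteq\EqSet_2\subseteq\EqSet_3$, hence $\GVar{\Atm}=\bigcup_i\GVar{s_i}\subseteq\GVar{\EqSet_3}$, and altogether $\GVar{\Goal_1,\Atm,\Goal_2}\subseteq\GVar{\EqSet_3}$. Case \rn{co-hyp} is analogous: the first premise is $\colp{\lpProg\cup\colpProg}{\emptyset}{\emptyset}{\Atm}{\EqSet_1\cup\UnEq{\Atm}{\bAtm}}{\EqSet_2}$ and the second is $\colp{\lpProg}{\colpProg}{\cohyp}{\Goal_1,\Goal_2}{\EqSet_2}{\EqSet_3}$; the IH on the first gives $\EqSet_1\cup\UnEq{\Atm}{\bAtm}\subseteq\EqSet_2$ (so $\EqSet_1\subseteq\EqSet_2$ and $\GVar{\Atm}\subseteq\GVar{\UnEq{\Atm}{\bAtm}}\subseteq\GVar{\EqSet_2}$), the IH on the second gives $\EqSet_2\subseteq\EqSet_3$ and $\GVar{\Goal_1,\Goal_2}\subseteq\GVar{\EqSet_3}$, and the two combine exactly as before.

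I do not expect any real obstacle here: the statement is essentially a monotonicity/bookkeeping invariant about the operational rules. The only point requiring a little care is the variable-inclusion clause in the inductive cases, where one must observe that the equations added for unification, $\UnEq{\Atm}{\bAtm}$, carry all variables of the selected atom $\Atm$ into the output set of equations — and that the output set only ever grows, by the $\EqSet_1\subseteq\EqSet_2$ part proved simultaneously. So the two conjuncts of part (1) should be proved together in one induction, since the variable-inclusion argument relies on the containment of equation sets established in the same proof.
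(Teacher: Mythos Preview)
Your proposal is correct and follows exactly the approach the paper indicates: the paper's proof reads in full ``(1) Straightforward induction on rules in \cref{fig:opsem}. (2) Trivial.'' You have simply spelled out the details of that induction, including the observation that $\GVar{\Atm}\subseteq\GVar{\UnEq{\Atm}{\bAtm}}$ and that equation-set containment propagates along the premises, which is precisely what ``straightforward'' is meant to cover here.
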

\begin{proof}
(1) Straightforward induction on rules in \cref{fig:opsem}. (2) Trivial.
\end{proof}

{On the other hand, we can define which answers are correct\EZComm{forse userei ovunque ``valid''} in an interpretation:} 
\begin{definition}\label{def:ans}
%Let $\extG{\Goal}{\EqSet}$ be a goal. 
For $I\subseteq\coHB$, the set of answers to $\extG{\Goal}{\EqSet}$ \emph{correct in $I$}  is $\lpAns{\Goal}{\EqSet}{I} = \{ \sbt \in \gsol{\EqSet} \mid \appSubst{\Goal}{\sbt} \subseteq I \}$. 
\end{definition}
%Note that, by \cref{def:coclauses}, we have $\lpans{\Goal}{\EqSet}\subseteq \lpbdans{\Goal}{\EqSet}$. 

{Hence, soundness of the operational semantics can be expressed as follows: all the answers computed for a given goal are correct in the declarative semantics.} 
\begin{theorem}[Soundness w.r.t.\ declarative semantics]\label{thm:lp-sound}
If $\colp{\lpProg}{\colpProg}{\emptyset}{\Goal}{\EqSet}{\EqSet'}$ holds, then $\gsol{\EqSet'} \subseteq \lpAns{\Goal}{\EqSet}{\FlexCo{\lpProg}{\colpProg}}$. 
\end{theorem}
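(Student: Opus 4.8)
The plan is to prove soundness by induction on the derivation of the operational judgment $\colp{\lpProg}{\colpProg}{\cohyp}{\Goal}{\EqSet_1}{\EqSet_2}$, but first I need to generalise the statement so that the inductive hypothesis is strong enough to handle the coinductive hypotheses $\cohyp$ that appear in the \rn{step} rule. The key idea, standard in coinductive logic programming soundness proofs, is that when we resolve a goal under hypotheses $\cohyp$, the atoms in $\cohyp$ act as assumptions: if every (suitably instantiated) atom in $\cohyp$ is valid in $\FlexCo{\lpProg}{\colpProg}$, then so is every instantiated atom of $\Goal$. So I would prove: \emph{if} $\colp{\lpProg}{\colpProg}{\cohyp}{\Goal}{\EqSet_1}{\EqSet_2}$ \emph{and} $\sbt\in\gsol{\EqSet_2}$ \emph{and} $\appSubst{\cohyp}{\sbt}\subseteq\FlexCo{\lpProg}{\colpProg}$, \emph{then} $\appSubst{\Goal}{\sbt}\subseteq\FlexCo{\lpProg}{\colpProg}$. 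Combined with \cref{prop:opsem-monotone}(1) (to know $\EqSet\subseteq\EqSet'$ and $\GVar{\Goal}\subseteq\GVar{\EqSet'}$, so that $\sbt\in\gsol{\EqSet'}$ is indeed an answer) and (2) (to know $\sbt\in\gsol{\EqSet}$), and instantiating $\cohyp=\emptyset$, this gives the theorem.

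For the induction, the \rn{empty} case is trivial. For the \rn{step} case, we have a fresh renaming $\sbt_0$, a clause $\clause{\appSubst{\bAtm}{\sbt_0}}{\appSubst{\cAtm_1}{\sbt_0},\dots,\appSubst{\cAtm_n}{\sbt_0}}\in\lpProg$, unifiability of $\Atm$ and $\bAtm$ in $\EqSet_1$, and two sub-derivations: one resolving the body $\cAtm_1,\dots,\cAtm_n$ under hypotheses $\cohyp\cup\{\Atm\}$ starting from $\EqSet_1\cup\UnEq{\Atm}{\bAtm}$ and yielding $\EqSet_2$, and one resolving $\Goal_1,\Goal_2$ from $\EqSet_2$ to $\EqSet_3$. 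Given $\sbt\in\gsol{\EqSet_3}$ with $\appSubst{\cohyp}{\sbt}\subseteq\FlexCo{\lpProg}{\colpProg}$: by monotonicity $\sbt\in\gsol{\EqSet_2}\subseteq\gsol{\EqSet_1\cup\UnEq{\Atm}{\bAtm}}$, so $\sbt$ unifies $\Atm$ and $\bAtm$, hence $\appSubst{\Atm}{\sbt}=\appSubst{\bAtm}{\sbt}$. The crux is to show $\appSubst{\Atm}{\sbt}\in\FlexCo{\lpProg}{\colpProg}$: since $\FlexCo{\lpProg}{\colpProg}$ is a comodel of $\lpProg$, i.e.\ consistent w.r.t.\ $\Ground{\lpProg}$, it suffices to exhibit premises in $\FlexCo{\lpProg}{\colpProg}$ for a ground instance of this clause whose conclusion is $\appSubst{\Atm}{\sbt}$; but wait — a comodel gives $I\subseteq\TOp{\lpProg}(I)$, so membership of a conclusion is \emph{derived} from the premises, not the other way around. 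The correct move is: to apply the inner IH to the body sub-derivation I need $\appSubst{(\cohyp\cup\{\Atm\})}{\sbt}\subseteq\FlexCo{\lpProg}{\colpProg}$, i.e.\ I need $\appSubst{\Atm}{\sbt}\in\FlexCo{\lpProg}{\colpProg}$ \emph{before} I can conclude anything about the body. This circularity is exactly what coinduction resolves.

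Therefore the induction on derivations alone does not suffice, and the main obstacle is precisely this: I must instead do a coinductive argument, showing that the set $U$ of all atoms $\appSubst{\Atm}{\sbt}$ arising this way (together with the given valid $\appSubst{\cohyp}{\sbt}$) is consistent w.r.t.\ $\Ground{\lpProg}$ and contained in $\Ind{\lpProg\cup\colpProg}$ — whence $U\subseteq\FlexCo{\lpProg}{\colpProg}$ by \cref{def:coclauses}. The containment in $\Ind{\lpProg\cup\colpProg}$ is where the \rn{co-hyp} rule is used: when an atom unifies with a coinductive hypothesis, its standard SLD resolution in $\lpProg\cup\colpProg$ is triggered, so by a separate (easy, since it is ordinary SLD) soundness lemma for $\colp{\lpProg\cup\colpProg}{\emptyset}{\emptyset}{\Atm}{\cdot}{\cdot}$ w.r.t.\ $\Ind{\lpProg\cup\colpProg}$, we get $\appSubst{\Atm}{\sbt}\in\Ind{\lpProg\cup\colpProg}$; for atoms resolved via \rn{step}, finiteness of the proof tree in $\lpProg\cup\colpProg$ propagates upward from the body because at some point every branch must hit \rn{co-hyp} or a fact. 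Consistency w.r.t.\ $\Ground{\lpProg}$ is witnessed directly by the \rn{step} rule: the instantiated clause used has conclusion $\appSubst{\Atm}{\sbt}$ and premises $\appSubst{\cAtm_i}{\sbt}$, which lie in $U$ by construction of the body sub-derivation. Most likely the paper factors this through the abstract result of Dagnino~\citeyear{Dagnino20} relating the operational "regular proof" construction to $\FlexCo{\is}{\cois}$, but a self-contained argument follows the two-part scheme above: build the candidate comodel $\subseteq\Ind{\lpProg\cup\colpProg}$ from the operational derivation, invoke the bounded/ordinary-SLD soundness lemma for the \rn{co-hyp}-triggered subderivations, and close coinductively. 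I expect the bookkeeping of substitutions and fresh renamings — ensuring $\sbt$ restricted appropriately is ground and unifies the renamed clause atoms — to be the routine-but-delicate part, while the genuine conceptual obstacle is recognising that plain induction fails and a coinductive (comodel-construction) argument is required.
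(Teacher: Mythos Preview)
Your diagnosis of the circularity in the naive induction is exactly right, and your proposed fix — build a candidate set $U$ of instantiated atoms from the operational derivation, show $U\subseteq\Ind{\lpProg\cup\colpProg}$ by a separate induction on the derivation (using ordinary SLD soundness for the \rn{co-hyp}-triggered subderivations as base cases), and show $U$ is consistent w.r.t.\ $\Ground{\lpProg}$ using the \rn{step} clauses — is a correct, self-contained route to the theorem. The bookkeeping you flag (extending $\sbt$ to cover fresh variables, and for atoms resolved via \rn{co-hyp} falling back on the earlier \rn{step} that placed the matching atom into $\cohyp$ to supply the $\lpProg$-rule) is exactly what is needed.

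The paper, however, takes a genuinely different route, and it is the one you anticipate in your last paragraph. Rather than constructing a comodel directly, the paper introduces an auxiliary \emph{inductive} inference system $\Loopcois{\lpProg}{\colpProg}$ (\cref{def:loop-is}) whose judgments $\LoopJ{\cohyp}{\Atm}$ carry the hypothesis set explicitly: its rule \rn{rule} mirrors \rn{step} (adding $\Atm$ to the hypotheses when unfolding), and its axiom \rn{hp} mirrors \rn{co-hyp} (requiring $\Atm\in\cohyp$ and $\Atm\in\Ind{\lpProg\cup\colpProg}$). Because the hypothesis set is part of the judgment, the circularity you identified disappears: \cref{lem:reg-lp-sound} is proved by a plain induction on the operational derivation, each case matching a rule of $\Loopcois{\lpProg}{\colpProg}$. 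Then \cref{prop:loop-is} (an instance of the general result of Dagnino~\citeyear{Dagnino20}) gives $\validInd{\Loopcois{\lpProg}{\colpProg}}{\LoopJ{\emptyset}{\Atm}}$ iff $\Atm\in\FlexReg{\lpProg}{\colpProg}$, yielding \cref{thm:reg-lp-sound}; \cref{thm:lp-sound} then follows from $\FlexReg{\lpProg}{\colpProg}\subseteq\FlexCo{\lpProg}{\colpProg}$. So the paper's approach buys a clean structural induction and, as a byproduct, the strictly stronger soundness with respect to the \emph{regular} declarative semantics, at the cost of relying on the external equivalence result; your approach buys self-containedness and goes straight to $\FlexCo{\lpProg}{\colpProg}$, but does not yield the regular-semantics refinement.
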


\paragraph{Completeness issues} The converse of this theorem, that is, all correct answers can be computed, cannot hold in general\EZComm{ossia per $I=\FlexCo{\lpProg}{\colpProg}$}, since, as shown \EZ{by Ancona and Dovier \citeyear{AnconaD15}}, coinductive declarative semantics does not admit any complete procedure\footnote{\EZ{That is, establishing whether an atom belongs to the coinductive declarative semantics is neither decidable nor semi-decidable, even when the Herbrand universe is restricted to the set of rational terms.}}, hence our model as well, since it generalizes the coinductive one. 
To explain why completeness does not hold in our case, we can adapt the following example from \EZ{Ancona and Dovier \citeyear{AnconaD15}}\footnote{\FD{Example 10 at page 8.}},
%%Let us analyse an example.  
where $p$ is a predicate symbol of arity $1$, $z$ and $s$ are function symbols of arity $0$ and $1$ respectively.
\figrule
\programmath
\[
\begin{array}{lcl}
p(X) &\clSep& p(s(X)). \\
p(X) &\coclSep&
\end{array}
\]
\unprogrammath
\figrule
Let us define $\underline{0} = z$, $\underline{n+1} = s(\underline{n})$ and $\underline{\omega} = s(s(\ldots))$. 
The declarative semantics is the set $\{ p(\underline{x}) \mid x \in \N\cup\{\omega\}\}$. 
In the operational semantics, instead, only $p(\underline{\omega})$ is considered true. 
Indeed, all derivations have to apply the rule \rn{co-hyp}, which imposes the equation $\lpeq{X}{s(X)}$, whose unique solution is $\underline{\omega}$. 
Therefore, the operational semantics is not complete. 

Now the question is the following:\EZComm{cercherei di dirlo un po' diversamente, da discutere}
can we characterize in a declarative way answers computed by the big-step semantics? 
{In the example, there is a difference between the atoms $p(\underline{\omega})$ and $p(\underline{n})$, with $n\in\N$, because} 
the former has a regular proof tree, namely, a tree with finitely many different subtrees, while the latter has only with non-regular, thus infinite, proof trees. 

Following this observation,
we prove that the operational semantics is sound and complete with respect to the restriction of the declarative semantics to atoms derivable by regular proof trees. 
As we will see, this set can be defined in model-theoretic terms, by restricting to finite comodels of the program. 
\EZ{Dagnino \citeyear{Dagnino20} defined this restriction} for an arbitrary (generalized) inference system. 
We report here relevant definitions and results. 

\paragraph{Regular declarative semantics}
{Let us write $X\subseteqfin Y$ if $X$ is a finite subset of $Y$.}
The \emph{regular interpretation} of $\Pair{\is}{\cois}$ is defined as 
\[ \FlexReg{\is}{\cois} = \bigcup \{X\subseteqfin \Ind{\is\cup\cois} \mid X\subseteq \InfOp{\is}(X)\} \] 

This definition is like the one of $\FlexCo{\is}{\cois}$, except that we take the union\footnote{Which could be an infinite set, hence it is not the same of the greatest finite consistent set.}
 only of those consistent subsets of $\Ind{\is\cup\cois}$ which are \emph{finite}.The set $\FlexReg{\is}{\cois}$ is a fixed point of $\InfOp{\is}$ and, precisely, it is the \emph{rational fixed point} \cite{AdamekMV06} of $\InfOp{\is}$ restricted to $\wp(\Ind{\is\cup\cois})$, hence we get $\FlexReg{\is}{\cois} \subseteq \FlexCo{\is}{\cois}$. 

The proof-theoretic characterization relies on \emph{regular proof trees}, which are proof trees with a finite number of subtrees \cite{Courcelle83}. 
That is, as proved \EZ{by Dagnino \citeyear{Dagnino20},} $\FlexReg{\is}{\cois}$ is the set of judgments with 
a regular proof tree in $\is$ whose nodes all have a finite \mbox{proof tree in $\is\cup\cois$.}

{As special case, we get} regular semantics of logic programs with coclauses. 
\begin{definition}\label{def:lp-reg}
The \emph{regular declarative semantics} of  $\Pair{\lpProg}{\colpProg}$, denoted by $\FlexReg{\lpProg}{\colpProg}$, is the union of all finite comodels included in $\Ind{\lpProg\cup\colpProg}$. 
%The set of \emph{regularly correct answers} to a goal $\extG{\Goal}{\EqSet}$ is the set $\lpregans{\Goal}{\EqSet} = \{ \sbt\in\gsol{\EqSet} \mid \appSubst{\Goal}{\sbt} \subseteq \FlexReg{\lpProg}{\colpProg}\}$. 
\end{definition}
As above, %$\FlexReg{\lpProg}{\colpProg}$ is a fixed point of $\TOp{\lpProg}$ below \EZComm{non mi pare usato prima} $\Ind{\lpProg\cup\colpProg}$, hence we get 
$\FlexReg{\lpProg}{\colpProg}\subseteq\FlexCo{\lpProg}{\colpProg}$, hence $\lpregans{\Goal}{\EqSet}\subseteq \lpans{\Goal}{\EqSet}$. 

{We state now  soundness and completeness of the operational semantics with respect to this semantics. {We write $\sbt\sbtord\asbt$ iff $\dom(\sbt)\subseteq\dom(\asbt)$ and, for all $\lpX\in\dom(\sbt)$, $\sbt(\lpX)=\asbt(\lpX)$.\EZComm{\`e l'inclusione insiemistica} It is easy to see that $\sbtord$ is a partial order and, if $\sbt\sbtord\asbt$ and $\GVar{\Goal}\subseteq\dom(\sbt)$, then $\appSubst{\Goal}{\sbt}=\appSubst{\Goal}{\asbt}$.}}

{\begin{theorem}[Soundness w.r.t.\ regular declarative semantics]\label{thm:reg-lp-sound}
%Let $\extG{\Goal}{\EqSet}$ be a goal, then:
If $\colp{\lpProg}{\colpProg}{\emptyset}{\Goal}{\EqSet}{\EqSet'}$, and $\sbt\in\gsol{\EqSet'}$, then $\sbt\in\lpregans{\Goal}{\EqSet}$.
\end{theorem}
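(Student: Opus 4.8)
The plan is to reduce the claim to a model-theoretic statement. Unfolding $\lpregans{\Goal}{\EqSet}=\lpAns{\Goal}{\EqSet}{\FlexReg{\lpProg}{\colpProg}}$ (\cref{def:ans}) and using \cref{prop:opsem-monotone} — which gives $\EqSet\subseteq\EqSet'$, hence $\sbt\in\gsol{\EqSet}$, and $\GVar{\Goal}\subseteq\GVar{\EqSet'}$, so $\appSubst{\Goal}{\sbt}$ is a set of ground atoms — it suffices to prove $\appSubst{\Goal}{\sbt}\subseteq\FlexReg{\lpProg}{\colpProg}$. By \cref{def:lp-reg}, for this it is enough to exhibit a \emph{finite comodel} $S$ of $\lpProg$ with $S\subseteq\Ind{\lpProg\cup\colpProg}$ and $\appSubst{\Goal}{\sbt}\subseteq S$; then $S\subseteq\FlexReg{\lpProg}{\colpProg}$ and we conclude. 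Incidentally, since $\FlexReg{\lpProg}{\colpProg}\subseteq\FlexCo{\lpProg}{\colpProg}$, this theorem subsumes \cref{thm:lp-sound}, which thus becomes a corollary.

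The set $S$ must be read off the (finite) derivation of the operational judgment, but a naive induction does not carry enough information: under a coinductive hypothesis an atom may legitimately belong to the comodel without being the head of a clause instance whose body lies in the comodel, because it gets justified only higher up, by the \rn{step} that put it into $\cohyp$. I would therefore prove, by structural induction on the derivation, the following strengthening: \emph{if $\colp{\lpProg}{\colpProg}{\cohyp}{\Goal}{\EqSet_1}{\EqSet_2}$ and $\sbt\in\gsol{\EqSet_2}$, then there is a finite $S\subseteq\Ind{\lpProg\cup\colpProg}$ with $\appSubst{\Goal}{\sbt}\subseteq S$ such that every $\Atm\in S$ satisfies $\Atm\in\appSubst{\cohyp}{\sbt}$ or $\Atm\in\TOp{\lpProg}(S)$.} The intuition is that $S$ collects the $\sbt$-instances of the atoms selected along the derivation; each of them is justified by a ground instance of a clause of $\lpProg$, except those \emph{closing a cycle}, which are precisely the $\sbt$-instances of the current hypotheses. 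Along the way we use repeatedly that, by \cref{prop:opsem-monotone}, equation sets only grow inside a derivation, so every equation set occurring in it is included in $\EqSet'$; hence the given $\sbt$ solves them all (in particular the constraints $\UnEq{\Atm}{\bAtm}$ attached to \rn{step} and \rn{co-hyp}) and instantiates every atom of the derivation to a ground atom, and the induction hypothesis is applicable with this same $\sbt$ at each recursive call.

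The base case \rn{empty} holds with $S=\emptyset$. For \rn{step}, let $\Atm$ be the selected atom and $\clause{\bAtm}{\cAtm_1,\dots,\cAtm_n}$ the chosen (renamed) clause of $\lpProg$; apply the induction hypothesis to the body premise, with hypotheses $\cohyp\cup\{\Atm\}$, obtaining $S_1$, and to the rest premise, with hypotheses $\cohyp$, obtaining $S_2$, and put $S=S_1\cup S_2\cup\{\appSubst{\Atm}{\sbt}\}$. Since $\sbt$ solves $\UnEq{\Atm}{\bAtm}$, $\appSubst{\Atm}{\sbt}$ is the head of a ground instance of that clause whose body is contained in $S_1$; as $\lpProg\subseteq\lpProg\cup\colpProg$ and $\Ind{\lpProg\cup\colpProg}$ is closed, this yields $\appSubst{\Atm}{\sbt}\in\Ind{\lpProg\cup\colpProg}$, and it also yields $\appSubst{\Atm}{\sbt}\in\TOp{\lpProg}(S)$; moreover the occurrences of $\appSubst{\Atm}{\sbt}$ allowed in $S_1$ by the extra hypothesis $\Atm$ are now discharged by this same clause instance, so the required condition holds for $S$ with hypotheses $\cohyp$. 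For \rn{co-hyp}, the selected atom $\Atm$ unifies in $\EqSet_1$ with some $\bAtm\in\cohyp$; the first premise is a standard SLD derivation in $\lpProg\cup\colpProg$ with empty hypotheses, so the induction hypothesis there (where $\appSubst{\emptyset}{\sbt}=\emptyset$) provides a finite $S_0\subseteq\Ind{\lpProg\cup\colpProg}$ with $\appSubst{\Atm}{\sbt}\in S_0$, certifying $\appSubst{\Atm}{\sbt}\in\Ind{\lpProg\cup\colpProg}$; applying the induction hypothesis to the rest premise gives $S_2$, and one takes $S=S_2\cup\{\appSubst{\Atm}{\sbt}\}$ (crucially \emph{not} adding $S_0$, which is only used to certify membership in $\Ind{\lpProg\cup\colpProg}$). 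The key point is that, since $\sbt$ solves $\UnEq{\Atm}{\bAtm}$, $\appSubst{\Atm}{\sbt}=\appSubst{\bAtm}{\sbt}\in\appSubst{\cohyp}{\sbt}$, so $\appSubst{\Atm}{\sbt}$ needs no clause-justification in $S$. Instantiating the strengthening with $\cohyp=\emptyset$ yields the $S$ required in the first paragraph: every $\Atm\in S$ then satisfies $\Atm\in\emptyset$ (impossible) or $\Atm\in\TOp{\lpProg}(S)$, so $S\subseteq\TOp{\lpProg}(S)$, i.e.\ $S$ is a comodel of $\lpProg$.

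I expect the main obstacle to be exactly the formulation and the bookkeeping of this invariant: recognising that the atoms of the partially built comodel lacking a clause-justification are precisely the $\sbt$-instances of the current coinductive hypotheses, that each of them coincides with the atom selected by the ancestor \rn{step} that created the hypothesis, and that it becomes justified as soon as that ancestor is processed. Once the invariant is fixed, the three inductive cases are routine, relying only on \cref{prop:opsem-monotone}, closedness of $\Ind{\cdot}$, and $\Ground{\lpProg}\subseteq\Ground{\lpProg\cup\colpProg}$. An alternative, more abstract route would invoke the companion result of Dagnino on regular proof trees — viewing the rules of \refToFigure{opsem} as an instance of the generic semi-algorithm for regular derivations, whose soundness w.r.t.\ $\FlexReg{\cdot}{\cdot}$ is proved once and for all — but instantiating that framework still requires matching unification, fresh renamings and equation sets with the abstract inference-system setting, which is of comparable difficulty to the direct argument.
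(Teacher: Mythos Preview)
Your proof is correct and takes a genuinely different route from the paper's. The paper factors soundness through the auxiliary inductive system $\Loopcois{\lpProg}{\colpProg}$ of \cref{def:loop-is}: it proves \cref{lem:reg-lp-sound}, showing by induction on the operational rules that $\colp{\lpProg}{\colpProg}{\cohyp}{\Goal}{\EqSet}{\EqSet'}$ implies $\validInd{\Loopcois{\lpProg}{\colpProg}}{\LoopJ{\appSubst{\cohyp}{\sbt}}{\appSubst{\Atm_i}{\sbt}}}$ for every $\Atm_i$ in $\Goal$, and then invokes \cref{prop:loop-is} (imported from the companion paper) to conclude $\appSubst{\Atm_i}{\sbt}\in\FlexReg{\lpProg}{\colpProg}$. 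You instead build the finite comodel $S$ directly from the operational derivation, bypassing both the intermediate system and the imported proposition. The two invariants are closely related---your condition ``every element of $S$ lies in $\appSubst{\cohyp}{\sbt}$ or in $\TOp{\lpProg}(S)$'' is essentially the semantic content of derivability of $\LoopJ{\appSubst{\cohyp}{\sbt}}{-}$, with $S$ playing the role of the set of nodes of the $\Loopcois{\lpProg}{\colpProg}$-proof---but your argument is self-contained and appeals only to \cref{def:lp-reg}. What the paper's decomposition buys is modularity and symmetry: the same intermediate system $\Loopcois{\lpProg}{\colpProg}$ also drives the completeness direction (\cref{lem:reg-lp-complete}), so both halves of the equivalence reduce to matching two inductively defined systems, and the bridge to $\FlexReg{\lpProg}{\colpProg}$ is outsourced once and for all to \cref{prop:loop-is}.
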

\begin{theorem}[Completeness  w.r.t.\ regular declarative semantics]\label{thm:reg-lp-complete}
If $\sbt\in\lpregans{\Goal}{\EqSet}$, then
 $\colp{\lpProg}{\colpProg}{\emptyset}{\Goal}{\EqSet}{\EqSet'}$, and  
$\sbt\sbtord\asbt$ for some $\EqSet'$ and \mbox{$\asbt\in\gsol{\EqSet'}$}. 
\end{theorem}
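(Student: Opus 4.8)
The plan is to exploit the \emph{proof-theoretic} reading of the regular semantics recalled above. Unpacking the hypothesis $\sbt\in\lpregans{\Goal}{\EqSet}$, which by definition is $\lpAns{\Goal}{\EqSet}{\FlexReg{\lpProg}{\colpProg}}$, we have $\sbt\in\gsol{\EqSet}$ and $\appSubst{\Goal}{\sbt}\subseteq\FlexReg{\lpProg}{\colpProg}$ (whence also $\GVar{\Goal}\subseteq\dom(\sbt)$, since $\appSubst{\Goal}{\sbt}$ is then ground). By the characterisation of \citet{Dagnino20}, every atom $\appSubst{\Atm}{\sbt}$ with $\Atm$ in $\Goal$ has a \emph{regular} proof tree in $\Ground{\lpProg}$ all of whose nodes have a \emph{finite} proof tree in $\Ground{\lpProg\cup\colpProg}$. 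The whole proof then amounts to turning these ground derivations into a single finite derivation of $\colp{\lpProg}{\colpProg}{\emptyset}{\Goal}{\EqSet}{\EqSet'}$, keeping the equations faithful enough to extract an $\asbt\in\gsol{\EqSet'}$ with $\sbt\sbtord\asbt$.

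First I would \emph{cut} each such regular tree into a finite one able to drive rule \rn{step}. Since a regular tree has finitely many distinct subtrees, along any infinite branch some subtree appears at two comparable nodes, hence coincides with one of its proper ancestors; cutting every branch at the first such node yields, because $\Ground{\lpProg}$ is finitary, a \emph{finite} tree $\tau$ by König's lemma. Its leaves are of two kinds: conclusions of facts of $\lpProg$, and ``back-pointers'' whose ground atom equals that of a proper ancestor.

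Next I would prove, by structural induction on $\tau$ together with an iteration over the atoms of the current goal (mirroring the way \rn{step} and \rn{co-hyp} decompose $\Goal_1,\Atm,\Goal_2$), a \emph{lifting lemma}: the operational rules can follow $\tau$ step by step while maintaining a ground substitution, extending $\sbt$, that solves the equations accumulated so far. At an internal node we apply \rn{step} with the clause of $\lpProg$ underlying the ground rule instance, using a fresh renaming $\sbt_0$; unifiability in the current $\EqSet$ holds because the current extension of $\sbt$, further defined on the variables introduced by $\sbt_0$ so as to match the ground rule instance labelling this node, solves $\UnEq{\Atm}{\appSubst{\bAtm}{\sbt_0}}$, and we recurse on the body with $\cohyp\cup\{\Atm\}$. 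At a back-pointer node, $\appSubst{\Atm}{\sbt}$ equals the atom of a proper ancestor, which was inserted into $\cohyp$ as a syntactic atom $\bAtm$ with $\appSubst{\Atm}{\sbt}=\appSubst{\bAtm}{\sbt}$ (so $\unifiable{\EqSet}{\Atm}{\bAtm}$), and, since $\colpProg\ne\emptyset$ in this case, \rn{co-hyp} applies; its first premise $\colp{\lpProg\cup\colpProg}{\emptyset}{\emptyset}{\Atm}{\EqSet\cup\UnEq{\Atm}{\bAtm}}{\EqSet''}$ is supplied by completeness of \emph{standard} SLD resolution with respect to the least model --- a separate, essentially classical lemma, again established via a lifting argument --- applied to the finite proof tree of $\appSubst{\Atm}{\sbt}$ in $\Ground{\lpProg\cup\colpProg}$. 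At a fact-leaf we use \rn{step} with the fact and then \rn{empty}. Threading the construction and taking the last extension of $\sbt$ yields the required $\EqSet'$ and $\asbt$. When $\colpProg=\emptyset$ the statement degenerates to completeness of SLD resolution. Alternatively, the whole argument can be done modularly: introduce the inductive ``circular'' inference system over $\Ground{\lpProg}$ with hypothesis-annotated judgements that mirrors \cref{fig:opsem} at the ground level, prove --- through the same lifting bookkeeping --- that the operational semantics is its symbolic counterpart, and then invoke the general equivalence between this inductive system and the regular interpretation proved by \citet{Dagnino20}.

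I expect the main obstacle to be precisely this lifting lemma: keeping, all along the construction, a coherent correspondence between the ground proof tree and the symbolic state --- the equation set, the fresh renamings, and the hypothesis set $\cohyp$ --- so that unifiability with clause heads and with coinductive hypotheses is always available, and so that the successive intermediate ground substitutions glue into a single $\asbt$ with $\sbt\sbtord\asbt$ solving the final $\EqSet'$. By contrast, the cycle-cutting step is routine once regularity is exploited, and termination of the auxiliary \rn{co-hyp} subderivations reduces to the inductive completeness lemma.
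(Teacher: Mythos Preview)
Your plan is sound, and in fact you sketch two routes. The ``alternative'' you mention at the end --- passing through the inductive hypothesis-annotated system and invoking the general equivalence of \citet{Dagnino20} --- is exactly what the paper does: \cref{prop:loop-is} supplies the bridge from $\FlexReg{\lpProg}{\colpProg}$ to derivability of $\LoopJ{\emptyset}{\appSubst{\Atm_i}{\sbt}}$ in $\Loopcois{\lpProg}{\colpProg}$, and the lifting is then \cref{lem:reg-lp-complete}, proved via a single-atom version (induction on derivations in $\Loopcois{\lpProg}{\colpProg}$, cases \rn{hp} and \rn{rule}) together with two auxiliary lemmas: one for enlarging the input equation set while preserving derivability and the extending substitution, and one for splicing a single-atom resolution into a goal resolution. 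Your separate ``classical completeness of SLD'' lemma for the \rn{co-hyp} premise is exactly the paper's treatment of the \rn{hp} case.

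Your primary, direct route --- K\"onig-cutting each regular tree and doing structural induction on the resulting finite tree with back-pointers --- is essentially an inlining of that argument: a finite cut tree with back-pointers \emph{is} a derivation in $\Loopcois{\lpProg}{\colpProg}$ (back-pointer leaves become \rn{hp}, internal nodes become \rn{rule}), so your induction on $\tau$ is the same induction the paper performs, only without naming the intermediate system. The modular decomposition buys a clean separation: the passage from regular trees to finite cyclic derivations is delegated once to \citet{Dagnino20}, and the symbolic lifting is done against an ordinary inductive definition. Your direct route buys self-containment at the price of redoing the cycle-cutting. The bookkeeping you flag as the main obstacle --- coherently threading fresh renamings, equation sets, and $\cohyp$ so that the extending substitutions glue into a single $\asbt$ --- is indeed the technical core in both versions; the paper isolates it into the two reusable lemmas mentioned above rather than carrying it through the tree traversal monolithically.
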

}

{That is, any answer computed for a given goal is correct in the regular declarative semantics, and any correct answer is included in a computed answer.   }
\cref{thm:reg-lp-sound} immediately entails \cref{thm:lp-sound} as $\lpregans{\Goal}{\EqSet}\subseteq \lpans{\Goal}{\EqSet}$. 

%The completeness result states that all regularly correct answers can be computed by the big-step semantics: 
%\begin{theorem}\label{thm:reg-lp-complete}
%Let $\extG{\Goal}{\EqSet}$ be a goal, then, 
%for all $\sbt\in\lpregans{\Goal}{\EqSet}$, 
%there exists $\EqSet'$ such that 
%$\colp{\lpProg}{\colpProg}{\emptyset}{\Goal}{\EqSet}{\EqSet'}$ and 
%$\sbt\sbtord\asbt$, for some $\asbt\in\gsol{\EqSet'}$. 
%\end{theorem}

\paragraph{Proof technique} 
{In order to prove the equivalence of the two semantics, we rely on a property which holds in general for the regular interpretation \cite{Dagnino20}: we can construct an equivalent inductive characterization. }
{That is, given a generalized inference system $\Pair{\is}{\cois}$ on the universe $\universe$, we can construct an inference system $\Loopcois{\is}{\cois}$ with judgments of shape $\LoopJ{\LoopHp}{\judg}$, for $\judg\in\universe$ and \mbox{$\LoopHp\subseteqfin\universe$}, such that the inductive interpretation of $\Loopcois{\is}{\cois}$ coincides with the regular interpretation of $\Pair{\is}{\cois}$.}
The set $\LoopHp$, whose elements are called \emph{coinductive hypotheses} \EZComm{forse usare sempre circular}, is used to 
%track already {considered} judgements so that we can 
detect cycles in the proof. 

{In particular, for logic programs with coclauses, we get an inference system with judgments of shape $\LoopJ{\cohyp}{\Atm}$, for $\cohyp$ finite set of ground atoms, and $\Atm$ ground atom, defined as follows.}

{\begin{definition}\label{def:loop-is}
Given $\Pair{\lpProg}{\colpProg}$, the inference system $\Loopcois{\lpProg}{\colpProg}$ consists of the following (meta-)rules:\EZComm{qui ho usato lo stile a frazione perch\'e aumenta moltissimo la leggibilit\`a; userei per le regole nomi analoghi a quelle della sem. op.}
\begin{description}
\item [\rn{hp}] $\Rule{}{\LoopJ{\cohyp}{\Atm}}$ \Space$\Atm \in \cohyp$ and $\Atm \in \Ind{\lpProg\cup\colpProg}$\\
\item [\rn{rule}] $\Rule{\LoopJ{\cohyp\cup\{\Atm\}}{\bAtm_1}\ \ldots\ \LoopJ{\cohyp\cup\{\Atm\}}{\bAtm_n}}{\LoopJ{\cohyp}{\Atm}}$\Space $(\clause{\Atm}{\bAtm_1,\dots,\bAtm_n}) \in \Ground{\lpProg}$
\end{description}
\end{definition}}
The following proposition states the equivalence with the regular interpretation. The proof is given \EZ{by Dagnino \citeyear{Dagnino20}} in the general case of inference systems with corules.
\begin{proposition}\label{prop:loop-is}
$\validInd{\Loopcois{\lpProg}{\colpProg}}{\LoopJ{\emptyset}{\Atm}}$ iff $\Atm \in \FlexReg{\lpProg}{\colpProg}$.
\end{proposition}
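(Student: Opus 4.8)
The plan is to prove the two implications separately, relying on the model-theoretic description of $\FlexReg{\lpProg}{\colpProg}$ from \cref{def:lp-reg} as the union of all finite comodels contained in $\Ind{\lpProg\cup\colpProg}$. The statement is an instance of the general correspondence between the regular interpretation and its inductive characterization established by Dagnino~\citeyear{Dagnino20} — to deduce it one only needs to observe that $\Loopcois{\lpProg}{\colpProg}$ of \cref{def:loop-is} is the construction $\Loopcois{\is}{\cois}$ instantiated at $\is=\Ground{\lpProg}$ and $\cois=\Ground{\colpProg}$, reading the side condition $\Atm\in\Ind{\lpProg\cup\colpProg}$ of \rn{hp} as $\validInd{\is\cup\cois}{\Atm}$. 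Still, the model-theoretic picture makes a direct self-contained argument short, so that is what I would write.

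For $\validInd{\Loopcois{\lpProg}{\colpProg}}{\LoopJ{\emptyset}{\Atm}}\Rightarrow\Atm\in\FlexReg{\lpProg}{\colpProg}$, I would fix a finite proof tree $\tau$ for $\LoopJ{\emptyset}{\Atm}$ and let $X$ be the finite set of atoms occurring as subjects of nodes of $\tau$. Then I would verify two facts. First, $X\subseteqfin\Ind{\lpProg\cup\colpProg}$, by structural induction on $\tau$ from the leaves upwards: an \rn{hp}-node gives membership directly through its side condition, and for a \rn{rule}-node the clause $\clause{\bAtm}{\bAtm_1,\dots,\bAtm_n}\in\Ground{\lpProg}\subseteq\Ground{\lpProg\cup\colpProg}$ applied to the children's subjects (already in $\Ind{\lpProg\cup\colpProg}$) yields the conclusion. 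Second, $X$ is a comodel, that is $X\subseteq\TOp{\lpProg}(X)$: if $\bAtm\in X$ is the subject of a \rn{rule}-node, the children's subjects form a clause body included in $X$; if $\bAtm\in X$ only occurs as the subject of an \rn{hp}-node, then $\bAtm\in\cohyp$ for the corresponding $\cohyp$, and since hypothesis sets are empty at the root and every strict ancestor of a node is a \rn{rule}-node adding its own subject, there is an ancestor \rn{rule}-node with subject $\bAtm$, which reduces to the previous case. Hence $X$ is a finite comodel inside $\Ind{\lpProg\cup\colpProg}$, so $X\subseteq\FlexReg{\lpProg}{\colpProg}$, and $\Atm\in X$ concludes.

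For the converse, I would start from a finite comodel $X$ with $\Atm\in X\subseteq\Ind{\lpProg\cup\colpProg}$ and show, by well-founded induction on $|X|-|\cohyp|$, that $\LoopJ{\cohyp}{\bAtm}$ is derivable for all $\bAtm\in X$ and $\cohyp\subseteq X$; the statement follows with $\bAtm=\Atm$ and $\cohyp=\emptyset$. If $\bAtm\in\cohyp$, use \rn{hp}, whose side condition holds because $X\subseteq\Ind{\lpProg\cup\colpProg}$. Otherwise $\bAtm\notin\cohyp$, and since $\bAtm\in X\subseteq\TOp{\lpProg}(X)$ there is a clause $\clause{\bAtm}{\bAtm_1,\dots,\bAtm_n}\in\Ground{\lpProg}$ with all $\bAtm_i\in X$; applying \rn{rule} leaves the premises $\LoopJ{\cohyp\cup\{\bAtm\}}{\bAtm_i}$, for which $\bAtm_i\in X$, $\cohyp\cup\{\bAtm\}\subseteq X$, and $|X|-|\cohyp\cup\{\bAtm\}|<|X|-|\cohyp|$, so the inductive hypothesis supplies them.

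I expect the main obstacle to be the bookkeeping in the first direction: making precise that any atom occurring in a hypothesis set along a branch was introduced as the subject of some ancestor \rn{rule}-node, which is exactly what allows an \rn{hp}-leaf to ``borrow'' a clause body contained in $X$ and thereby witnesses that $X$ is a comodel. The second direction is by comparison a routine terminating saturation over the finite set $X$. Alternatively, one may just invoke Dagnino's~\citeyear{Dagnino20} general theorem, in which case only the identification of $\Loopcois{\lpProg}{\colpProg}$ with the general construction remains to be checked.
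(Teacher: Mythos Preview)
Your proposal is correct. The paper itself does not give any argument here: it simply states that the proof is given by Dagnino~\citeyear{Dagnino20} in the general setting of inference systems with corules, which is exactly the alternative you mention at the end. So your direct argument is additional content that the paper does not supply.

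The self-contained proof you sketch is sound. In the forward direction, the key observation---that the hypothesis set at any node of a derivation of $\LoopJ{\emptyset}{\Atm}$ is precisely the set of subjects of its strict ancestors, all of which are \rn{rule}-nodes---is what makes the comodel check go through for \rn{hp}-leaves, and you identify this correctly. In the backward direction, the well-founded induction on $|X|-|\cohyp|$ is the standard termination argument for cycle-detecting procedures over a finite carrier and works as stated. What your direct proof buys is self-containment and some intuition for why the inductive system $\Loopcois{\lpProg}{\colpProg}$ captures exactly the finite-comodel semantics; what the paper's deferral buys is modularity, since the same general theorem is reused elsewhere (e.g., \cref{prop:is-reg-ind}).
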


{Note that the definition of $\validInd{\Loopcois{\lpProg}{\colpProg}}{\LoopJ{\cohyp}{\Atm}}$ has many analogies with that of the operational semantics in \cref{fig:opsem}. The key difference is that the former handles \emph{ground}, not necessarily finite, atoms, the latter not necessarily ground finite atoms (we use the same metavariables $\Atm$ and $\cohyp$ for simplicity).  In both cases already {considered} atoms are kept in an auxiliary set $\cohyp$. In the former, to derive an atom $\Atm\in\cohyp$,  the side condition requires $\Atm$ to belong to the inductive intepretation of the program $\lpProg\cup\colpProg$. In the latter, when an atom $\Atm$ \emph{unifies}  with one in $\cohyp$, standard SLD resolution is triggered in the program  $\lpProg\cup\colpProg$.  } 

{To summarize, $\validInd{\Loopcois{\lpProg}{\colpProg}}{\LoopJ{\cohyp}{\Atm}}$ can be seen as an abstract version, at the level of the underlying inference system, of operational semantics. 
Hence, the proof of soundness and completeness can be based on proving a precise correspondence between these two inference systems, both interpreted inductively. This is very convenient since the proof can be driven in both directions by induction on the defining rules.}   

{The correspondence is formally stated in the following two lemmas.}

\begin{lemma}[Soundness w.r.t.\ inductive characterization of regular semantics]\label{lem:reg-lp-sound}
For all $\cohyp$ and $\extG{\Atm_1,\ldots,\Atm_n}{\EqSet}$,\\
%Let $\extG{\Goal}{\EqSet}$ be a goal with $\Goal = $ ($n\ge 0$). 
if $\colp{\lpProg}{\colpProg}{\cohyp}{\Atm_1,\ldots,\Atm_n}{\EqSet}{\EqSet'}$ then, for all $\sbt \in \gsol{\EqSet'}$ and $i\in1..n$, $\validInd{\Loopcois{\lpProg}{\colpProg}}{\LoopJ{\appSubst{\cohyp}{\sbt}}{\appSubst{\Atm_i}{\sbt}}}$. 
\end{lemma}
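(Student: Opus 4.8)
The plan is to proceed by induction on the derivation of $\colp{\lpProg}{\colpProg}{\cohyp}{\Atm_1,\ldots,\Atm_n}{\EqSet}{\EqSet'}$ in the operational semantics of \cref{fig:opsem}, showing in each case that for every $\sbt\in\gsol{\EqSet'}$ and every $i\in1..n$ we can build a derivation of $\LoopJ{\appSubst{\cohyp}{\sbt}}{\appSubst{\Atm_i}{\sbt}}$ in $\Loopcois{\lpProg}{\colpProg}$, interpreted inductively. Throughout I will use \cref{prop:opsem-monotone}, so that $\EqSet\subseteq\EqSet'$ and hence $\gsol{\EqSet'}\subseteq\gsol{\EqSet}$, plus the invariant $\GVar{\cohyp}\subseteq\GVar{\EqSet}$ noted in \cref{sect:lp-opsem}, which guarantees that $\appSubst{\cohyp}{\sbt}$ is a finite set of \emph{ground} atoms whenever $\sbt\in\gsol{\EqSet}$; thus the judgments $\LoopJ{\appSubst{\cohyp}{\sbt}}{\appSubst{\Atm_i}{\sbt}}$ are well-formed in the sense of \cref{def:loop-is}.

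First I would dispatch the base case \rn{empty}: here $n=0$, so the statement is vacuous. For the inductive cases the goal has shape $\Goal_1,\Atm,\Goal_2$ and the last rule is either \rn{step} or \rn{co-hyp}; in both, the statement for the atoms in $\Goal_1,\Goal_2$ follows directly from the induction hypothesis applied to the second premise $\colp{\lpProg}{\colpProg}{\cohyp}{\Goal_1,\Goal_2}{\EqSet_2}{\EqSet_3}$, using $\gsol{\EqSet_3}\subseteq\gsol{\EqSet_2}$. So the real work in each case is the selected atom $\Atm$. In case \rn{step}, fix $\sbt\in\gsol{\EqSet_3}$; since $\EqSet_1\cup\UnEq{\Atm}{\appSubst{\bAtm}{\sbt'}}\subseteq\EqSet_2\subseteq\EqSet_3$, $\sbt$ solves the unification equations, so $\appSubst{\Atm}{\sbt}=\appSubst{(\appSubst{\bAtm}{\sbt'})}{\sbt}$, and $\clause{\appSubst{\bAtm}{\sbt'}}{\appSubst{\cAtm_1}{\sbt'},\ldots,\appSubst{\cAtm_n}{\sbt'}}$ is a (renamed) clause; instantiating it further by $\sbt$ gives a ground instance in $\Ground{\lpProg}$ with conclusion $\appSubst{\Atm}{\sbt}$ and premises $\appSubst{\cAtm_j}{(\sbt'\sbt)}$. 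Applying the induction hypothesis to the first premise of \rn{step}, whose coinductive-hypothesis set is $\cohyp\cup\{\Atm\}$, yields $\validInd{\Loopcois{\lpProg}{\colpProg}}{\LoopJ{\appSubst{(\cohyp\cup\{\Atm\})}{\sbt}}{\appSubst{\cAtm_j}{(\sbt'\sbt)}}}$ for each $j$; since $\appSubst{(\cohyp\cup\{\Atm\})}{\sbt}=\appSubst{\cohyp}{\sbt}\cup\{\appSubst{\Atm}{\sbt}\}$, these are exactly the premises needed to apply rule \rn{rule} of \cref{def:loop-is} and conclude $\LoopJ{\appSubst{\cohyp}{\sbt}}{\appSubst{\Atm}{\sbt}}$.

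In case \rn{co-hyp} the selected atom $\Atm$ unifies with some $\bAtm\in\cohyp$, and its SLD resolution is triggered in $\lpProg\cup\colpProg$ via the premise $\colp{\lpProg\cup\colpProg}{\emptyset}{\emptyset}{\Atm}{\EqSet_1\cup\UnEq{\Atm}{\bAtm}}{\EqSet_2}$. Fixing $\sbt\in\gsol{\EqSet_3}$, I need to show $\LoopJ{\appSubst{\cohyp}{\sbt}}{\appSubst{\Atm}{\sbt}}$: I will apply rule \rn{hp} of \cref{def:loop-is}, which requires $\appSubst{\Atm}{\sbt}\in\appSubst{\cohyp}{\sbt}$ and $\appSubst{\Atm}{\sbt}\in\Ind{\lpProg\cup\colpProg}$. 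The first holds because $\sbt$ solves $\UnEq{\Atm}{\bAtm}$ (these equations are in $\EqSet_2\subseteq\EqSet_3$), so $\appSubst{\Atm}{\sbt}=\appSubst{\bAtm}{\sbt}\in\appSubst{\cohyp}{\sbt}$; here I also use $\GVar{\bAtm}\subseteq\GVar{\cohyp}\subseteq\GVar{\EqSet_1}$. The second is where I expect the main obstacle: I need that a \emph{purely inductive} SLD resolution of $\Atm$ in $\lpProg\cup\colpProg$ (empty coinductive hypotheses, no coclause-triggering since the corule component is $\emptyset$) is sound for the inductive semantics $\Ind{\lpProg\cup\colpProg}=\Ind{\Ground{\lpProg\cup\colpProg}}$. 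This is the classical soundness of SLD resolution; it will itself be a separate lemma proved by induction on the operational derivation (the case $\colpProg=\emptyset$ of the whole development), stating that $\colp{\lpProg'}{\emptyset}{\emptyset}{\Goal}{\EqSet}{\EqSet'}$ and $\sbt\in\gsol{\EqSet'}$ imply $\appSubst{\Goal}{\sbt}\subseteq\Ind{\lpProg'}$. Granting that lemma for $\lpProg'=\lpProg\cup\colpProg$, applied to the goal consisting of the single atom $\Atm$, we obtain $\appSubst{\Atm}{\sbt}\in\Ind{\lpProg\cup\colpProg}$ for $\sbt\in\gsol{\EqSet_2}\supseteq\gsol{\EqSet_3}$, completing the side condition and hence the application of \rn{hp}. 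The delicate points to check carefully are the bookkeeping of fresh renamings in \rn{step} (so that composing $\sbt'$ with $\sbt$ is harmless on the original variables) and the variable-scope invariant ensuring all the instantiated coinductive-hypothesis sets remain finite and ground; both are routine but must be stated, and the former is the only place a naive argument could go wrong.
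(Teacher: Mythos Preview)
Your proof is correct and follows the same inductive structure as the paper's: induction on the operational rules, with \rn{empty} vacuous, the atoms in $\Goal_1,\Goal_2$ handled by the induction hypothesis on the second premise (via \cref{prop:opsem-monotone}), and the selected atom discharged by rule \rn{rule} in case \rn{step} and rule \rn{hp} in case \rn{co-hyp}. The one difference worth noting is how you obtain $\appSubst{\Atm}{\sbt}\in\Ind{\lpProg\cup\colpProg}$ in the \rn{co-hyp} case: instead of the separate SLD-soundness lemma you propose, the paper applies the \emph{same} induction hypothesis to the first premise (which is a derivation for the pair $\Pair{\lpProg\cup\colpProg}{\emptyset}$, so the lemma is read as implicitly quantified over the program), yielding $\validInd{\Loopcois{(\lpProg\cup\colpProg)}{\emptyset}}{\LoopJ{\emptyset}{\appSubst{\Atm}{\sbt}}}$, and then uses \cref{prop:loop-is} together with the auxiliary identity $\Ind{\lpProg'}=\FlexReg{\lpProg'}{\emptyset}$ (\cref{prop:is-reg-ind}) to conclude; this keeps everything inside the abstract framework and avoids duplicating the induction, but your route is equally valid.
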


\begin{lemma}[Completeness w.r.t.\ inductive characterization of regular semantics]\label{lem:reg-lp-complete}
For all $\cohyp$, $\extG{\Atm_1,\ldots,\Atm_n}{\EqSet}$ and $\sbt\in\gsol{\EqSet}$, \\
if $\validInd{\Loopcois{\lpProg}{\colpProg}}{\LoopJ{\appSubst{\cohyp}{\sbt}}{\appSubst{\Atm_i}{\sbt}}}$, for all $i\in 1..n$, then
$\colpNarrow{\lpProg}{\colpProg}{\cohyp}{\Atm_1,\ldots,\Atm_n}{\EqSet}{\EqSet'}$ and 
$\sbt{\sbtord}\asbt$, for some $\EqSet'$ \mbox{and $\asbt\in \gsol{\EqSet'}$}. 
\end{lemma}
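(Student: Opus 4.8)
The plan is to prove \cref{lem:reg-lp-complete} by strong induction on the total size $N=\sum_{i=1}^{n}|\tau_i|$ of the family of (necessarily finite) derivations $\tau_i$ witnessing $\validInd{\Loopcois{\lpProg}{\colpProg}}{\LoopJ{\appSubst{\cohyp}{\sbt}}{\appSubst{\Atm_i}{\sbt}}}$, always processing the \emph{leftmost} atom $\Atm_1$, i.e.\ applying the operational rules with empty left context. Note the hypothesis is meaningful only when $\GVar{\cohyp}\cup\bigcup_i\GVar{\Atm_i}\subseteq\dom(\sbt)$, which I assume throughout. Two cases are settled immediately: if $n=0$, rule \rn{empty} gives $\EqSet'=\EqSet$ and $\asbt=\sbt$; if $\colpProg=\emptyset$, the operational judgement is, modulo the now irrelevant $\cohyp$, ordinary SLD resolution, and a one-line induction on $\Loopcois{\lpProg}{\colpProg}$-derivations shows $\validInd{\Loopcois{\lpProg}{\colpProg}}{\LoopJ{H}{\judg}}$ always implies $\judg\in\Ind{\lpProg\cup\colpProg}$ (a \rn{hp}-node has this as its side condition, and $\Ind{\cdot}$ is a fixed point), so the claim reduces to the classical completeness of SLD resolution w.r.t.\ the least Herbrand model \cite{Lloyd87,Apt97}, phrased, as here, with equation sets. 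So take $n\ge1$, $\colpProg\ne\emptyset$, and inspect the last rule of $\tau_1$.

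If $\tau_1$ ends with \rn{hp}, there is $\bAtm\in\cohyp$ with $\appSubst{\bAtm}{\sbt}=\appSubst{\Atm_1}{\sbt}$, and $\appSubst{\Atm_1}{\sbt}\in\Ind{\lpProg\cup\colpProg}$. Then $\sbt\in\gsol{\EqSet\cup\UnEq{\Atm_1}{\bAtm}}$, so $\unifiable{\EqSet}{\Atm_1}{\bAtm}$, and by completeness of standard SLD resolution w.r.t.\ the inductive interpretation of $\lpProg\cup\colpProg$ (on the complete Herbrand base, since $\appSubst{\Atm_1}{\sbt}$ may be infinite) there are $\EqSet_2$ and $\asbt_1\in\gsol{\EqSet_2}$ with $\colp{\lpProg\cup\colpProg}{\emptyset}{\emptyset}{\Atm_1}{\EqSet\cup\UnEq{\Atm_1}{\bAtm}}{\EqSet_2}$ and $\sbt\sbtord\asbt_1$. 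Since larger solutions w.r.t.\ $\sbtord$ agree with $\sbt$ on $\dom(\sbt)$, the derivations $\tau_2,\dots,\tau_n$ also witness $\validInd{\Loopcois{\lpProg}{\colpProg}}{\LoopJ{\appSubst{\cohyp}{\asbt_1}}{\appSubst{\Atm_i}{\asbt_1}}}$, of total size $N-|\tau_1|<N$; the induction hypothesis supplies $\EqSet_3$ and $\asbt\in\gsol{\EqSet_3}$ with $\colp{\lpProg}{\colpProg}{\cohyp}{\Atm_2,\dots,\Atm_n}{\EqSet_2}{\EqSet_3}$ and $\asbt_1\sbtord\asbt$. Rule \rn{co-hyp} then yields $\colp{\lpProg}{\colpProg}{\cohyp}{\Atm_1,\dots,\Atm_n}{\EqSet}{\EqSet_3}$ with $\sbt\sbtord\asbt_1\sbtord\asbt$.

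If $\tau_1$ ends with \rn{rule}, it uses a ground instance of some $\clause{\bAtm}{\cAtm_1,\dots,\cAtm_m}\in\lpProg$ via a ground $\gamma$ with $\appSubst{\bAtm}{\gamma}=\appSubst{\Atm_1}{\sbt}$, and has children witnessing $\validInd{\Loopcois{\lpProg}{\colpProg}}{\LoopJ{\appSubst{\cohyp}{\sbt}\cup\{\appSubst{\Atm_1}{\sbt}\}}{\appSubst{\cAtm_j}{\gamma}}}$. I take a fresh renaming $\eta$ of the clause, as \rn{step} requires, and extend $\sbt$ to $\sbt_0$ by $\sbt_0(\eta(\lpX))=\gamma(\lpX)$ for each clause variable $\lpX$. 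Then $\sbt_0$ solves $\EqSet\cup\UnEq{\Atm_1}{\appSubst{\bAtm}{\eta}}$, so $\unifiable{\EqSet}{\Atm_1}{\appSubst{\bAtm}{\eta}}$; moreover $\appSubst{(\cohyp\cup\{\Atm_1\})}{\sbt_0}=\appSubst{\cohyp}{\sbt}\cup\{\appSubst{\Atm_1}{\sbt}\}$ and $\appSubst{(\appSubst{\cAtm_j}{\eta})}{\sbt_0}=\appSubst{\cAtm_j}{\gamma}$, so the children of $\tau_1$ witness the induction hypothesis for the goal $\extG{\appSubst{\cAtm_1}{\eta},\dots,\appSubst{\cAtm_m}{\eta}}{\EqSet\cup\UnEq{\Atm_1}{\appSubst{\bAtm}{\eta}}}$ under $\cohyp\cup\{\Atm_1\}$ and $\sbt_0$, of total size $|\tau_1|-1<N$. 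This gives $\EqSet_2$ and $\asbt_0\in\gsol{\EqSet_2}$ with $\sbt_0\sbtord\asbt_0$; since $\sbt\sbtord\sbt_0\sbtord\asbt_0$, exactly as before $\tau_2,\dots,\tau_n$ witness the induction hypothesis for $\Atm_2,\dots,\Atm_n$ under $\EqSet_2$, yielding $\EqSet_3$ and $\asbt\in\gsol{\EqSet_3}$ with $\asbt_0\sbtord\asbt$. Rule \rn{step} assembles $\colp{\lpProg}{\colpProg}{\cohyp}{\Atm_1,\dots,\Atm_n}{\EqSet}{\EqSet_3}$ and $\sbt\sbtord\sbt_0\sbtord\asbt_0\sbtord\asbt$.

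Throughout, \cref{prop:opsem-monotone} keeps the invariant $\GVar{\cohyp}\subseteq\GVar{\EqSet}$ and ensures that enlarging the equation set only shrinks its solution set, so the successive ground solutions are mutually coherent. I expect the main obstacle to be the \rn{rule}/\rn{step} case: one must reconcile the ground instance of a clause used abstractly in $\Loopcois{\lpProg}{\colpProg}$ with the fresh-renamed copy plus unification demanded operationally, which is exactly what forces the ground solution to be \emph{extended} at every \rn{step} and then \emph{threaded}, along $\sbtord$, through both the body derivation and the remainder of the goal; making the domains and fresh variables line up and checking that the inductive derivations transport along these extensions is the delicate bookkeeping. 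The only ingredient external to the inference-systems machinery is classical SLD completeness, invoked in the \rn{hp} case and the $\colpProg=\emptyset$ case.
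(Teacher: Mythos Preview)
Your argument is correct, and the overall line is close to the paper's, but the decomposition differs. The paper factors the proof into three pieces: a one-atom version (\cref{lem:atom-complete}) proved by induction on the $\Loopcois{\lpProg}{\colpProg}$-derivation, a combining lemma (\cref{lem:ext-goal}, itself resting on an equation-extension proposition \cref{prop:ext-eq}) that merges a single-atom derivation with one for the rest of the goal, and then the main lemma by induction on the length of the goal. SLD completeness for $\lpProg\cup\colpProg$ is isolated as a separate lemma (\cref{lem:bound-complete}) and invoked in the \rn{hp} case, just as you do. Your single strong induction on $\sum_i|\tau_i|$ collapses the outer length-induction and the inner derivation-induction into one measure and threads the growing ground solution directly through the two recursive calls, which lets you dispense with \cref{lem:ext-goal} and \cref{prop:ext-eq} as named lemmas; the bookkeeping they encapsulate (fresh-variable disjointness, compatibility of extensions, transport of the $\Loopcois{}{}$ derivations along $\sbtord$) is precisely what you spell out inline. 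Your explicit treatment of the degenerate case $\colpProg=\emptyset$ is a point the paper's proof of \cref{lem:atom-complete} leaves implicit (its \rn{hp} case applies \rn{co-hyp}, whose side condition is $\colpProg\ne\emptyset$); your reduction to ordinary SLD completeness is the clean way to close that. One caution: the ``classical SLD completeness'' you invoke must be the version over the complete Herbrand base and in the equation-set big-step formulation used here; the paper proves exactly that as \cref{lem:bound-complete}, so you may prefer to cite that rather than \cite{Lloyd87,Apt97} directly.
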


\EZ{Soundness follows from \cref{lem:reg-lp-sound} and \cref{prop:loop-is}, as detailed below.
\begin{proofOf}{\cref{thm:reg-lp-sound}}
Let us assume $\colp{\lpProg}{\colpProg}{\emptyset}{\Goal}{\EqSet}{\EqSet'}$ with $\Goal = \Atm_1,\ldots,\Atm_n$, and consider $\sbt\in\gsol{\EqSet'}$. 
By \cref{lem:reg-lp-sound}, for all $i\in 1..n$, $\validInd{\Loopcois{\lpProg}{\colpProg}}{\LoopJ{\emptyset}{\appSubst{\Atm_i}{\sbt}}}$ holds, hence, by \cref{prop:loop-is}, we get $\appSubst{\Atm_i}{\sbt} \in \FlexReg{\lpProg}{\colpProg}$. 
Therefore, by \cref{def:lp-reg}, we get $\sbt\in\lpregans{\Goal}{\EqSet}$, as needed. 
\end{proofOf}}

\EZ{Analogously, completeness follows from \cref{lem:reg-lp-complete} and  \cref{prop:loop-is}, as detailed below.
\begin{proofOf}{\cref{thm:reg-lp-complete}}
Let $\Goal = \Atm_1,\ldots,\Atm_n$ and $\sbt\in\lpregans{\Goal}{\EqSet}$. 
Then, for all $i\in 1..n$, we have $\appSubst{\Atm_i}{\sbt}\in\FlexReg{\lpProg}{\colpProg}$ and, by \cref{prop:loop-is}, we get $\validInd{\Loopcois{\lpProg}{\colpProg}}{\LoopJ{\emptyset}{\appSubst{\Atm_i}{\sbt}}}$. 
Hence, the thesis follows by \cref{lem:reg-lp-complete}. 
\end{proofOf}}

\section{Related work and conclusion}\label{sect:related}
\EZ{We have provided a detailed formal account of an extension of logic programming where programs are enriched by coclauses, which can be used to tune the interpretation of predicates on non-well-founded structures. More in detail, following the same pattern as for standard logic programming, we have defined:
%formal account of an extension of coLP and coSLD resolution supporting flexible coinduction
\begin{itemize}
\item A declarative semantics (the union of all finite comodels which are subsets of a certain set of atoms determined by coclauses).
\item An operational semantics (a combination of standard SLD resolution and coSLD resolution) shown to be sound and complete with respect to the declarative semantics. 
\end{itemize}
As in the standard case, the latter provides a semi-algorithm. Indeed, concrete  strategies (such as breadth-first visit of the SLD tree) can be used to ensure that the operational derivation, if any, is found.  In this paper we do not deal with this part, however we expect it to be not too different from the standard case.}

\EZ{It has been shown \cite{AnconaD15} that, taking as declarative semantics the coinductive semantics (largest comodel), there is not even a semi-algorithm to check that an atom belongs to that semantics. Hence, there is no hope to find a complete operational semantics. On the other hand, our paper provides, for an extension of logic programming usable in pratice to handle non-well-founded structures, fully-developed foundations and results which are exactly the analogous of those for standard logic programming. }

%In this way, we can exprss fixed points of recursive definitions which
%are not necessarily the least, nor the greatest ones. 
%We have expressed  the declarative semantics as the largest comodel below a certain set of atoms determined by coclauses, 
%Then, we have proved that the latter is sound and complete w.r.t. the former restricted to finite comodels. 

CoLP has been initially proposed by Simon et al.\ \citeyear{Simon06,SimonMBG06,SimonBMG07} as a convenient sub-paradigm of logic programming to
model circularity; it was soon recognized the limitation of its expressive power that does not allow mutually recursive inductive and coinductive predicates, or predicates whose correct interpretation is neither the least, nor the greatest fixed point.

Moura et al.\ \citeyear{Moura13,Moura14} and Ancona \citeyear{Ancona13} have proposed implementations of coLP based on refinements of the Simon's original proposal
with the main aim of making them more portable and flexible.
Ancona has extended {coLP} by introducing a 
\textit{finally} clause, allowing the user to define the specific behavior of a predicate
when solved by coinductive hypothesis.
Moura's implementation is embedded in a tabled Prolog related to the implementation of Logtalk, and is based on a
mechanism similar to \textit{finally} clauses to specify customized behavior of predicates when
solved by coinductive hypothesis. While such mechanisms resemble coclauses, the corresponding formalization is
purely operational and lacks a declarative semantics and corresponding proof principles for proving correctness of predicate
definitions based on them.

Ancona and Dovier \citeyear{AnconaD15}
have proposed an operational semantics of coLP based on the big-step approach, which is simpler than the operational semantics
initially proposed by Simon et al.\, and proved it to be sound. They have also formally shown that there is no complete
procedure for deciding whether a regular goal belongs to the coinductive declarative semantics, but provided no completeness result restricted to regular derivations, neither mechanisms to extend coLP and make it more flexible.

Ancona et al.\ \citeyear{AnconaDZ17} were the first proposing a principled extension of coLP based on the notion
of cofact, with both a declarative and operational semantics; the latter is expressed in big-step style, following the approach of
Ancona and Dovier, and is proved to be sound w.r.t. the former. An implementation
is provided through a SWI-Prolog meta-interpreter. 

Our present work differs from the extension of coLP with cofacts mentioned above for the following novel contributions:
\begin{itemize}
\item we consider the more general notion of coclause, which includes the notion of cofact, but is a more expressive extension of coLP;
\item we introduce the notion of regular declarative semantics and prove coSLD resolution extended with
  coclauses is {sound and complete} w.r.t. the regular declarative semantics;
\item we show how {generalized} inference systems are closely related to logic programs with coclauses {and rely on this relationship to carry out proofs in a clean and principled way;}
\item we extend the implementation\footnote{See \url{https://github.com/davideancona/coLP-with-coclauses}{, where also examples of \refToSect{examples} are available.}} of the SWI-Prolog meta-interpreter to support coclauses.     
\end{itemize}

While coSLD resolution and its proposed extensions are limited by the fact that cycles must be detected  in derivations to allow resolution to succeed, a stream of work based on the notion
of \emph{structural resolution} \cite{KPS12-2,KJS17} (S-resolution for short)
aims to make coinductive resolution more powerful, by allowing to lazily detect infinite derivations
which do not have {cycles}.
In particular, recent results~\cite{Y17,KY17,BKL19} investigate how it is possible to integrate coLP cycle detection into S-resolution,
by proposing a comprehensive theory.
Trying to integrate S-resolution with coclauses is an interesting topic for future work aiming to make coLP even more flexible.

Another direction for further research consists in elaborating and extending the examples of logic programs with coclauses provided in \cref{sect:examples}, to formally prove their correctness, and experiment their effectiveness with the implemented meta-interpreter.

\bibliographystyle{acmtrans}
\bibliography{biblio}

\begin{thebibliography}{}

\bibitem[\protect\citeauthoryear{Aczel}{Aczel}{1977}]{Aczel77}
{\sc Aczel, P.} 1977.
\newblock An introduction to inductive definitions.
\newblock In {\em Handbook of Mathematical Logic}, {J.~Barwise}, Ed. Studies in
  Logic and the Foundations of Mathematics, vol.~90. Elsevier, 739 -- 782.

\bibitem[\protect\citeauthoryear{Ad{\'{a}}mek, Milius, and
  Velebil}{Ad{\'{a}}mek et~al\mbox{.}}{2006}]{AdamekMV06}
{\sc Ad{\'{a}}mek, J.}, {\sc Milius, S.}, {\sc and} {\sc Velebil, J.} 2006.
\newblock Iterative algebras at work.
\newblock {\em Mathematical Structures in Computer Scienc\/}~{\em 16,\/}~6,
  1085--1131.

\bibitem[\protect\citeauthoryear{Ancona}{Ancona}{2013}]{Ancona13}
{\sc Ancona, D.} 2013.
\newblock Regular corecursion in prolog.
\newblock {\em Comput. Lang. Syst. Struct.\/}~{\em 39,\/}~4, 142--162.

\bibitem[\protect\citeauthoryear{Ancona, Dagnino, Rot, and Zucca}{Ancona
  et~al\mbox{.}}{2020}]{AnconaDRZ20}
{\sc Ancona, D.}, {\sc Dagnino, F.}, {\sc Rot, J.}, {\sc and} {\sc Zucca, E.}
  2020.
\newblock A big step from finite to infinite computations.
\newblock {\em Science of Computer Programming\/}~{\em 197}, 102492.

\bibitem[\protect\citeauthoryear{Ancona, Dagnino, and Zucca}{Ancona
  et~al\mbox{.}}{2017a}]{AnconaDZ17}
{\sc Ancona, D.}, {\sc Dagnino, F.}, {\sc and} {\sc Zucca, E.} 2017a.
\newblock Extending coinductive logic programming with co-facts.
\newblock In {\em First Workshop on Coalgebra, Horn Clause Logic Programming
  and Types, CoALP-Ty'16}, {E.~Komendantskaya} {and} {J.~Power}, Eds.
  Electronic Proceedings in Theoretical Computer Science, vol. 258. Open
  Publishing Association, 1--18.

\bibitem[\protect\citeauthoryear{Ancona, Dagnino, and Zucca}{Ancona
  et~al\mbox{.}}{2017b}]{AnconaDZ@esop17}
{\sc Ancona, D.}, {\sc Dagnino, F.}, {\sc and} {\sc Zucca, E.} 2017b.
\newblock Generalizing inference systems by coaxioms.
\newblock In {\em Programming Languages and Systems - 26th European Symposium
  on Programming, {ESOP} 2017}, {H.~Yang}, Ed. Lecture Notes in Computer
  Science, vol. 10201. Springer, Berlin, 29--55.

\bibitem[\protect\citeauthoryear{Ancona, Dagnino, and Zucca}{Ancona
  et~al\mbox{.}}{2018}]{AnconaDZ@ecoop18}
{\sc Ancona, D.}, {\sc Dagnino, F.}, {\sc and} {\sc Zucca, E.} 2018.
\newblock Modeling infinite behaviour by corules.
\newblock In {\em 32nd European Conference on Object-Oriented Programming,
  {ECOOP} 2018}, {T.~D. Millstein}, Ed. LIPIcs, vol. 109. Schloss Dagstuhl -
  Leibniz-Zentrum fuer Informatik, Dagstuhl, 21:1--21:31.

\bibitem[\protect\citeauthoryear{Ancona and Dovier}{Ancona and
  Dovier}{2015}]{AnconaD15}
{\sc Ancona, D.} {\sc and} {\sc Dovier, A.} 2015.
\newblock A theoretical perspective of coinductive logic programming.
\newblock {\em Fundamenta Informaticae\/}~{\em 140,\/}~3-4, 221--246.

\bibitem[\protect\citeauthoryear{Apt}{Apt}{1997}]{Apt97}
{\sc Apt, K.~R.} 1997.
\newblock {\em From logic programming to Prolog}.
\newblock Prentice Hall International series in computer science. Prentice
  Hall.

\bibitem[\protect\citeauthoryear{Basold, Komendantskaya, and Li}{Basold
  et~al\mbox{.}}{2019}]{BKL19}
{\sc Basold, H.}, {\sc Komendantskaya, E.}, {\sc and} {\sc Li, Y.} 2019.
\newblock Coinduction in uniform: Foundations for corecursive proof search with
  horn clauses.
\newblock In {\em {ESOP} 2019}, {L.~Caires}, Ed. Lecture Notes in Computer
  Science, vol. 11423. Springer, 783--813.

\bibitem[\protect\citeauthoryear{Courcelle}{Courcelle}{1983}]{Courcelle83}
{\sc Courcelle, B.} 1983.
\newblock Fundamental properties of infinite trees.
\newblock {\em Theoretical Computer Science\/}~{\em 25}, 95--169.

\bibitem[\protect\citeauthoryear{Dagnino}{Dagnino}{2017}]{Dagnino17}
{\sc Dagnino, F.} 2017.
\newblock Generalizing inference systems by coaxioms.
\newblock M.S.\ thesis, DIBRIS, University of Genova.
\newblock Best italian master thesis in Theoretical Computer Science 2018.

\bibitem[\protect\citeauthoryear{Dagnino}{Dagnino}{2019}]{Dagnino19}
{\sc Dagnino, F.} 2019.
\newblock Coaxioms: flexible coinductive definitions by inference systems.
\newblock {\em Logical Methods in Computer Science\/}~{\em 15,\/}~1.

\bibitem[\protect\citeauthoryear{Dagnino}{Dagnino}{2020}]{Dagnino20}
{\sc Dagnino, F.} 2020.
\newblock Foundations of regular coinduction.
\newblock Tech. rep., DIBRIS, University of Genova. May.
\newblock Available at \url{https://arxiv.org/abs/2006.02887}. Submitted for
  journal publication.

\bibitem[\protect\citeauthoryear{Gupta, Saeedloei, DeVries, Min, Marple, and
  Kluzniak}{Gupta et~al\mbox{.}}{2011}]{GuptaSDMMK11}
{\sc Gupta, G.}, {\sc Saeedloei, N.}, {\sc DeVries, B.~W.}, {\sc Min, R.}, {\sc
  Marple, K.}, {\sc and} {\sc Kluzniak, F.} 2011.
\newblock Infinite computation, co-induction and computational logic.
\newblock In {\em CALCO 2011 - Algebra and Coalgebra in Computer Science},
  {A.~Corradini}, {B.~Klin}, {and} {C.~C{\^{\i}}rstea}, Eds. Lecture Notes in
  Computer Science, vol. 6859. Springer, 40--54.

\bibitem[\protect\citeauthoryear{Komendantskaya et~al\mbox{.}}{Komendantskaya
  et~al\mbox{.}}{2016}]{KPS12-2}
{\sc Komendantskaya, E.} {\sc et~al\mbox{.}} 2016.
\newblock Coalgebraic logic programming: from semantics to implementation.
\newblock {\em J. Logic and Computation\/}~{\em 26,\/}~2, 745.

\bibitem[\protect\citeauthoryear{Komendantskaya et~al\mbox{.}}{Komendantskaya
  et~al\mbox{.}}{2017}]{KJS17}
{\sc Komendantskaya, E.} {\sc et~al\mbox{.}} 2017.
\newblock A productivity checker for logic programming.
\newblock {\em Post-proc. LOPSTR'16\/}.

\bibitem[\protect\citeauthoryear{Komendantskaya and Li}{Komendantskaya and
  Li}{2017}]{KY17}
{\sc Komendantskaya, E.} {\sc and} {\sc Li, Y.} 2017.
\newblock Productive corecursion in logic programming.
\newblock {\em Theory Pract. Log. Program.\/}~{\em 17,\/}~5-6, 906--923.

\bibitem[\protect\citeauthoryear{Leroy and Grall}{Leroy and
  Grall}{2009}]{LeroyG09}
{\sc Leroy, X.} {\sc and} {\sc Grall, H.} 2009.
\newblock Coinductive big-step operational semantics.
\newblock {\em Information and Computation\/}~{\em 207,\/}~2, 284--304.

\bibitem[\protect\citeauthoryear{Li}{Li}{2017}]{Y17}
{\sc Li, Y.} 2017.
\newblock Structural resolution with coinductive loop detection.
\newblock In {\em Post-proceedings of CoALP-Ty'16}, {E.~Komendantskaya} {and}
  {J.~Power}, Eds.

\bibitem[\protect\citeauthoryear{Lloyd}{Lloyd}{1987}]{Lloyd87}
{\sc Lloyd, J.~W.} 1987.
\newblock {\em Foundations of Logic Programming, 2nd Edition}.
\newblock Springer.

\bibitem[\protect\citeauthoryear{L{\"{o}}ding and
  Tollk{\"{o}}tter}{L{\"{o}}ding and Tollk{\"{o}}tter}{2016}]{LodingT16}
{\sc L{\"{o}}ding, C.} {\sc and} {\sc Tollk{\"{o}}tter, A.} 2016.
\newblock Transformation between regular expressions and omega-automata.
\newblock In {\em 41st International Symposium on Mathematical Foundations of
  Computer Science, {MFCS} 2016}, {P.~Faliszewski}, {A.~Muscholl}, {and}
  {R.~Niedermeier}, Eds. LIPIcs, vol.~58. Schloss Dagstuhl - Leibniz-Zentrum
  f{\"{u}}r Informatik, 88:1--88:13.

\bibitem[\protect\citeauthoryear{Mantadelis, Rocha, and Moura}{Mantadelis
  et~al\mbox{.}}{2014}]{Moura14}
{\sc Mantadelis, T.}, {\sc Rocha, R.}, {\sc and} {\sc Moura, P.} 2014.
\newblock Tabling, rational terms, and coinduction finally together!
\newblock {\em {TPLP}\/}~{\em 14,\/}~4-5, 429--443.

\bibitem[\protect\citeauthoryear{Moura}{Moura}{2013}]{Moura13}
{\sc Moura, P.} 2013.
\newblock A portable and efficient implementation of coinductive logic
  programming.
\newblock In {\em Practical Aspects of Declarative Languages - 15th
  International Symposium, {PADL} 2013, Rome, Italy, January 21-22, 2013.
  Proceedings}. 77--92.

\bibitem[\protect\citeauthoryear{Simon}{Simon}{2006}]{Simon06}
{\sc Simon, L.} 2006.
\newblock Extending logic programming with coinduction.
\newblock Ph.D. thesis, University of Texas at Dallas.

\bibitem[\protect\citeauthoryear{Simon, Bansal, Mallya, and Gupta}{Simon
  et~al\mbox{.}}{2007}]{SimonBMG07}
{\sc Simon, L.}, {\sc Bansal, A.}, {\sc Mallya, A.}, {\sc and} {\sc Gupta, G.}
  2007.
\newblock Co-logic programming: Extending logic programming with coinduction.
\newblock In {\em Automata, Languages and Programming, 34th International
  Colloquium, {ICALP} 2007}, {L.~Arge}, {C.~Cachin}, {T.~Jurdzinski}, {and}
  {A.~Tarlecki}, Eds. Lecture Notes in Computer Science, vol. 4596. Springer,
  472--483.

\bibitem[\protect\citeauthoryear{Simon, Mallya, Bansal, and Gupta}{Simon
  et~al\mbox{.}}{2006}]{SimonMBG06}
{\sc Simon, L.}, {\sc Mallya, A.}, {\sc Bansal, A.}, {\sc and} {\sc Gupta, G.}
  2006.
\newblock Coinductive logic programming.
\newblock In {\em Logic Programming, 22nd International Conference, {ICLP}
  2006}, {S.~Etalle} {and} {M.~Truszczynski}, Eds. Lecture Notes in Computer
  Science, vol. 4079. Springer, 330--345.

\bibitem[\protect\citeauthoryear{Tarski}{Tarski}{1955}]{Tarski55}
{\sc Tarski, A.} 1955.
\newblock A lattice-theoretical fixpoint theorem and its applications.
\newblock {\em Pacific Journal of Mathematics\/}~{\em 5,\/}~2, 285--309.

\end{thebibliography}

\newpage
\appendix
\section{Proofs}

In this section we report proofs omitted in \cref{sect:correctness}.

\paragraph{Soundness} 
\EZ{We prove \cref{lem:reg-lp-sound}.}
To carry out the proof, we rely on \cref{prop:loop-is} and on 
the following proposition, stating that the inductive declarative semantics of a logic program coincides with the regular semantics of a logic program with no coclauses. 
The proof is given \EZ{by Dagnino \citeyear{Dagnino20}} in the general case of inference systems with corules.
\begin{proposition}\label{prop:is-reg-ind}
Let $\lpProg$ be a logic program, then 
$\Ind{\lpProg} = \FlexReg{\lpProg}{\emptyset}$. 
\end{proposition}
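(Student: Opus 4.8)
The plan is to prove the two inclusions separately, working directly with the definitions: recall that $\Ind{\lpProg}$ is the inductive interpretation of $\Ground{\lpProg}$ and, by \cref{def:lp-reg}, $\FlexReg{\lpProg}{\emptyset}$ is the union of all finite comodels of $\lpProg$ included in $\Ind{\lpProg\cup\emptyset}=\Ind{\lpProg}$. Since this is an instance of a general fact about regular interpretations, one could alternatively just appeal to the cited result of Dagnino, but a self-contained argument is short.

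The inclusion $\FlexReg{\lpProg}{\emptyset} \subseteq \Ind{\lpProg}$ is immediate: $\FlexReg{\lpProg}{\emptyset}$ is a union of sets each included in $\Ind{\lpProg}$, hence is itself included in $\Ind{\lpProg}$. Equivalently, via the proof-theoretic reading, an atom with a regular proof tree in $\Ground{\lpProg}$ all of whose nodes have a finite proof tree has, in particular, a finite proof tree at its root. For the converse $\Ind{\lpProg} \subseteq \FlexReg{\lpProg}{\emptyset}$, let $\Atm \in \Ind{\lpProg}$ and fix a finite proof tree $t$ for $\Atm$ in $\Ground{\lpProg}$. Take $X$ to be the set of ground atoms labelling the nodes of $t$; as $t$ is finite, $X \subseteqfin \coHB$. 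Then $X \subseteq \Ind{\lpProg}$, because each element of $X$ is the root of the corresponding subtree of $t$, which is again a finite proof tree. Moreover $X$ is a comodel of $\lpProg$, i.e.\ $X \subseteq \TOp{\lpProg}(X)$: for $\bAtm \in X$, pick a node labelled $\bAtm$; its children are labelled by a set $\prem$ with $\RulePair{\prem}{\bAtm} \in \Ground{\lpProg}$ and $\prem \subseteq X$, whence $\bAtm \in \TOp{\lpProg}(X)$. So $X$ is a finite comodel included in $\Ind{\lpProg}$ with $\Atm \in X$, and \cref{def:lp-reg} gives $\Atm \in \FlexReg{\lpProg}{\emptyset}$.

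The argument is essentially routine; the only point deserving a little care is the set-of-children formulation of proof trees, where distinct children of a node may carry the same label, so $X$ and the premise sets $\prem$ must be read as sets of labels, and finiteness of $X$ relies on the finitary assumption on inference systems together with the finiteness of $t$. I do not expect any genuine obstacle.
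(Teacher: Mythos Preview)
Your argument is correct. The paper, however, does not actually prove this proposition: it simply records that ``the proof is given by Dagnino (2020) in the general case of inference systems with corules'' and moves on. So your approach is genuinely different in the sense that you supply a direct, self-contained argument where the paper defers to an external reference. What you gain is that the reader does not have to chase the citation for what is, as you note, a routine fact; what the paper's approach buys is uniformity, since both \cref{prop:loop-is} and \cref{prop:is-reg-ind} are treated as instances of the general theory developed elsewhere. One small remark: in your closing comment, finiteness of $X$ follows immediately from $t$ having finitely many nodes; the finitary assumption on rules is not really needed at that point (it is what guarantees that well-founded proof trees are finite in the first place, but once you have a finite $t$ you are done).
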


\begin{proofOf}{\cref{lem:reg-lp-sound}}
The proof is by induction on rules of \cref{fig:opsem}. 
\begin{description}
\item [\rn{empty}] 
There is nothing to prove. 

\item [\rn{step}] 
We have $\Goal = \Goal_1,\Atm_i,\Goal_2$, there is a fresh renaming $\clause{\bAtm}{\bAtm_1,\ldots,\bAtm_k}$ of a clause in $\lpProg$ such that $\Atm_i$ and $\bAtm$ are unifiable in $\EqSet$, that is, $\EqSet_1 = \EqSet\cup\UnEq{\Atm_i}{\bAtm}$ is solvable, and $\colpsem{\cohyp\cup\{\Atm_i\}}{\bAtm_1,\ldots,\bAtm_k}{\EqSet_1}{\EqSet_2}$ and $\colpsem{\cohyp}{\Goal_1,\Goal_2}{\EqSet_2}{\EqSet'}$ hold. 
Let $\sbt\in\gsol{\EqSet'}$, then, by induction hypothesis, we have, for all $j\in 1..n$ with $j\ne i$, $\validInd{\Loopcois{\lpProg}{\colpProg}}{\LoopJ{\appSubst{\cohyp}{\sbt}}{\appSubst{\Atm_j}{\sbt}}}$ holds. 
By \cref{prop:opsem-monotone}, we have $\EqSet_1\subseteq \EqSet_2\subseteq \EqSet'$, hence $\gsol{\EqSet'}\subseteq\gsol{\EqSet_2}\subseteq\gsol{\EqSet_1}$, thus $\sbt\in\gsol{\EqSet_2} \subseteq\gsol{\EqSet_1}$, and, since $\UnEq{\Atm_i}{\bAtm}\subseteq \EqSet_1$, $\sbt$ is a unifier of $\Atm_i$ and $\bAtm$, that is, $\appSubst{\Atm_i}{\sbt} = \appSubst{\bAtm}{\sbt}$. 
Then, by induction hypothesis, we also get, for all $j\in 1..k$, $\validInd{\Loopcois{\lpProg}{\colpProg}}{\LoopJ{\appSubst{(\cohyp\cup\{\Atm_i\})}{\sbt}}{\appSubst{\bAtm_j}{\sbt}}}$ holds. 
Since $\appSubst{(\cohyp\cup\{\Atm_i\})}{\sbt} = \appSubst{\cohyp}{\sbt}\cup\{\appSubst{\Atm_i}{\sbt}\}$ and $\clause{\appSubst{\bAtm}{\sbt}}{\appSubst{\bAtm_1}{\sbt},\ldots,\appSubst{\bAtm_k}{\sbt}} \in \Ground{\lpProg}$ and $\appSubst{\Atm_i}{\sbt} = \appSubst{\bAtm}{\sbt}$, by rule \rn{unfold} of \cref{def:loop-is}, we get that $\validInd{\Loopcois{\lpProg}{\colpProg}}{\LoopJ{\appSubst{\cohyp}{\sbt}}{\appSubst{\Atm_i}{\sbt}}}$ holds as well. 

\item [\rn{co-hyp}] 
We have $\Goal = \Goal_1,\Atm_i,\Goal_2$, there is an atom $\bAtm \in \cohyp$ that unifies with $\Atm_i$ in $\EqSet$, that is, $\EqSet_1 = \EqSet\cup\UnEq{\Atm_i}{\bAtm}$ is solvable, and $\lpsem{\Atm_i}{\EqSet_1}{\EqSet_2}$ and $\colpsem{\cohyp}{\Goal_1,\Goal_2}{\EqSet_2}{\EqSet'}$ hold. 
Let $\sbt\in \gsol{\EqSet'}$, then, by induction hypothesis, we get, for all $j\in 1..n$ with $j\ne i$, $\validInd{\Loopcois{\lpProg}{\colpProg}}{\LoopJ{\appSubst{\cohyp}{\sbt}}{\appSubst{\Atm_j}{\sbt}}}$ holds.
By \cref{prop:opsem-monotone}, we have $\EqSet_1\subseteq \EqSet_2\subseteq \EqSet'$, hence $\gsol{\EqSet'}\subseteq\gsol{\EqSet_2}\subseteq\gsol{\EqSet_1}$, thus $\sbt\in\gsol{\EqSet_2} \subseteq\gsol{\EqSet_1}$, and, since $\UnEq{\Atm_i}{\bAtm}\subseteq \EqSet_1$, $\sbt$ is a unifier of $\Atm_i$ and $\bAtm$, that is, $\appSubst{\Atm_i}{\sbt} = \appSubst{\bAtm}{\sbt}$. 
By induction hypothesis, we get $\validInd{\Loopcois{(\lpProg\cup\colpProg)}{\emptyset}}{\LoopJ{\emptyset}{\appSubst{\Atm_i}{\sbt}}}$, hence, by \cref{prop:loop-is} and \cref{prop:is-reg-ind}, we get $\appSubst{\Atm_i}{\sbt} \in \Ind{\lpProg\cup\colpProg}$. 
Furthermore, since $\appSubst{\Atm_i}{\sbt} = \appSubst{\bAtm}{\sbt}$ and $\bAtm \in \cohyp$, we have $\appSubst{\Atm_i}{\sbt} \in \appSubst{\cohyp}{\sbt}$. 
Therefore, by rule \rn{hp} of \cref{def:loop-is}, we get that $\validInd{\Loopcois{\lpProg}{\colpProg}}{\LoopJ{\appSubst{\cohyp}{\sbt}}{\appSubst{\Atm_i}{\sbt}}}$ holds as well. 
\end{description}
\end{proofOf}

%%We can prove \cref{thm:reg-lp-sound} relying on \cref{prop:loop-is}.
%\begin{proofOf}{\cref{thm:reg-lp-sound}}
%Let us assume $\colp{\lpProg}{\colpProg}{\emptyset}{\Goal}{\EqSet}{\EqSet'}$ with $\Goal = \Atm_1,\ldots,\Atm_n$ and consider $\sbt\in\gsol{\EqSet'}$. 
%By \cref{lem:reg-lp-sound} we have that, for all $i\in 1..n$, $\validInd{\Loopcois{\lpProg}{\colpProg}}{\LoopJ{\emptyset}{\appSubst{\Atm_i}{\sbt}}}$ holds, hence, by \cref{prop:loop-is}, we get $\appSubst{\Atm_i}{\sbt} \in \FlexReg{\lpProg}{\colpProg}$. 
%Therefore, by \cref{def:lp-reg}, we get $\sbt\in\lpregans{\Goal}{\EqSet}$, as needed. 
%\end{proofOf}

\paragraph{Completeness} \EZ{We need some preliminary results, then we prove \cref{lem:reg-lp-complete}.}

We start by observing a property of the operational semantics. 
In the following, we say that substitutions \DA{$\sbt_1$ and $\sbt_2$} are compatible, denoted by \DA{$\sbtcomp{\sbt_1}{\sbt_2}$} if, for all $\lpX\in\dom(\sbt_1)\cap\dom(\sbt_2)$, $\sbt_1(\lpX)=\sbt_2(\lpX)$, and 
we denote by $\sbt_1\sbtjoin\sbt_2$ the union of two substitutions, which is well-defined only for compatible substitutions. 
\FDComm{tagliato: if $\sbtcomp{\sbt_1}{\sbt_2}$ we denote by $\sbt_1\sbtjoin\sbt_2$ the union of the two substitutions defined by 
\[(\sbt_1\sbtjoin\sbt_2)(\lpX) = \begin{cases}
\sbt_1(\lpX) & \lpX\in\dom(\sbt_1) \\
\sbt_2(\lpX) & \text{otherwise}
\end{cases}\]
which is well-defined as the two substitutions are compatible. }
Note that, $\sbt_i\sbtord \sbt_1\sbtjoin\sbt_2$, for all $i=1,2$, by definition. 

\begin{proposition}\label{prop:ext-eq} 
Let $\extG{\Goal}{\EqSet_1}$ be a goal, $\sbt_1\in\gsol{\DA{\EqSet_1}}$, $\EqSet'_1$ such that $\EqSet_1\subseteq\EqSet'_1$ and  $\sbt_1\sbtord\asbt_1$, for some $\asbt_1\in\gsol{\EqSet'_1}$. 
If $\colp{\lpProg}{\colpProg}{\cohyp}{\Goal}{\EqSet_1}{\EqSet_2}$ and $\sbt_1\sbtord\sbt_2$, for some $\sbt_2\in\gsol{\EqSet_2}$, then 
there exists $\EqSet'_2$ such that $\EqSet_2\subseteq\EqSet'_2$, $\colpsem{\cohyp}{\Goal}{\EqSet'_1}{\EqSet'_2}$ and $\asbt_1\sbtord\asbt_2$, for some $\asbt_2\in\gsol{\EqSet'_2}$.
\end{proposition}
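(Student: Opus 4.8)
The plan is to argue by induction on the derivation of $\colp{\lpProg}{\colpProg}{\cohyp}{\Goal}{\EqSet_1}{\EqSet_2}$, reading the statement as quantified also over the pair $\Pair{\lpProg}{\colpProg}$, so that the program switch made by rule \rn{co-hyp} (whose triggered standard-SLD sub-derivation concerns $\Pair{\lpProg\cup\colpProg}{\emptyset}$) is covered by the induction hypothesis. The base case \rn{empty} is immediate: there $\Goal=\EList$ and $\EqSet_2=\EqSet_1$, so one takes $\EqSet'_2=\EqSet'_1$ and $\asbt_2=\asbt_1$, and all four conclusions hold trivially, using \rn{empty} to rederive $\colpsem{\cohyp}{\EList}{\EqSet'_1}{\EqSet'_1}$.

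For the inductive cases \rn{step} and \rn{co-hyp}, which share the same shape, I would treat them uniformly. Write $\Goal=\Goal_1,\Atm,\Goal_2$, let $\bAtm$ be the (freshly renamed) head of the selected clause, resp.\ the selected coinductive hypothesis, and let $\EqSet_0$ be the intermediate set of equations of the rule instance; so $\unifiable{\EqSet_1}{\Atm}{\bAtm}$ holds, the first premise resolves $\cAtm_1,\dots,\cAtm_k$ (the clause body), resp.\ resolves $\Atm$ by standard SLD in $\lpProg\cup\colpProg$, from input $\EqSet_1\cup\UnEq{\Atm}{\bAtm}$ to $\EqSet_0$, and the second premise resolves $\Goal_1,\Goal_2$ from $\EqSet_0$ to $\EqSet_2$. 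By \cref{prop:opsem-monotone} we have $\EqSet_1\cup\UnEq{\Atm}{\bAtm}\subseteq\EqSet_0\subseteq\EqSet_2$, so the given $\sbt_2\in\gsol{\EqSet_2}$ is a solution of $\EqSet_0$ and of $\EqSet_1\cup\UnEq{\Atm}{\bAtm}$ as well, and in particular $\appSubst{\Atm}{\sbt_2}=\appSubst{\bAtm}{\sbt_2}$.

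Next I would check that $\sbt_2$ and $\asbt_1$ are compatible, so that $\sbt_2\sbtjoin\asbt_1$ is defined: both extend $\sbt_1$, and — having renamed the fresh variables of the derivation apart from $\GVar{\EqSet'_1}$ and from $\dom(\asbt_1)$ — the variables common to $\dom(\sbt_2)$ and $\dom(\asbt_1)$ all lie in $\dom(\sbt_1)$, where the two agree. Since $\asbt_1$ solves $\EqSet'_1$ and $\sbt_2$ solves $\UnEq{\Atm}{\bAtm}$, the join $\sbt_2\sbtjoin\asbt_1$ solves $\EqSet'_1\cup\UnEq{\Atm}{\bAtm}$, which witnesses $\unifiable{\EqSet'_1}{\Atm}{\bAtm}$, the side condition needed to reapply the rule with the same clause/hypothesis and renaming. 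I would then invoke the induction hypothesis twice. First on the first premise, with augmented input $\EqSet'_1\cup\UnEq{\Atm}{\bAtm}$, taking $\sbt_2$ in the role of the initial solution, $\sbt_2\sbtjoin\asbt_1$ in the role of its extension, and $\sbt_2$ again in the role of the solution of the original output $\EqSet_0$; this yields some $\EqSet'_0\supseteq\EqSet_0$, a matching sub-derivation from $\EqSet'_1\cup\UnEq{\Atm}{\bAtm}$ to $\EqSet'_0$, and some $\asbt_0\in\gsol{\EqSet'_0}$ with $\sbt_2\sbtjoin\asbt_1\sbtord\asbt_0$. Then on the second premise, with augmented input $\EqSet'_0$, initial solution $\sbt_2$, extension $\asbt_0$, and output solution $\sbt_2$; this yields $\EqSet'_2\supseteq\EqSet_2$, a sub-derivation $\colpsem{\cohyp}{\Goal_1,\Goal_2}{\EqSet'_0}{\EqSet'_2}$, and some $\asbt_2\in\gsol{\EqSet'_2}$ with $\asbt_0\sbtord\asbt_2$. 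Reapplying \rn{step} (resp.\ \rn{co-hyp}) with these two sub-derivations gives $\colpsem{\cohyp}{\Goal_1,\Atm,\Goal_2}{\EqSet'_1}{\EqSet'_2}$, and the chain $\asbt_1\sbtord\sbt_2\sbtjoin\asbt_1\sbtord\asbt_0\sbtord\asbt_2$ delivers $\asbt_1\sbtord\asbt_2\in\gsol{\EqSet'_2}$ together with $\EqSet_2\subseteq\EqSet'_2$, as required.

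The main obstacle is not the inductive skeleton but the fresh-variable bookkeeping underneath it. Rule \rn{step}, and the standard-SLD resolution triggered by \rn{co-hyp}, introduce renamings that are fresh only with respect to the judgment being derived, not with respect to $\EqSet'_1$ or $\dom(\asbt_1)$; to reuse the derivation's structure one must rename all of them apart first, which perturbs the intermediate equation sets and therefore demands a recheck that the inclusions $\EqSet_1\cup\UnEq{\Atm}{\bAtm}\subseteq\EqSet_0\subseteq\EqSet_2$ and the applicability of the induction hypothesis survive. It is precisely this renaming, together with the fact that $\sbt_1$ is a common lower bound of $\sbt_2$ and $\asbt_1$, that supports the compatibility claim $\sbtcomp{\sbt_2}{\asbt_1}$ — equivalently, the solvability of $\EqSet'_1$ together with each intermediate equation set produced along the new derivation — which is the single point on which the whole reconstruction rests.
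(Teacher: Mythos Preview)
Your proof follows the paper's approach: induction on the big-step derivation, re-applying the same rule over the enlarged input set and chaining two uses of the induction hypothesis. Two small points where the paper is more careful than your uniform treatment: (i) it splits \rn{step} and \rn{co-hyp}, since in the latter $\bAtm\in\cohyp$ is \emph{not} freshly renamed but already satisfies $\GVar{\bAtm}\subseteq\GVar{\EqSet_1}$, so the freshness argument does not apply and one instead uses that $\sbt_1$ itself solves $\UnEq{\Atm}{\bAtm}$; (ii) it does not join $\sbt_2$ with $\asbt_1$ directly but first restricts $\sbt_2$ to $\dom(\sbt_1)\cup\GVar{\bAtm}$ before forming the join --- this restriction is what actually justifies your compatibility claim, because the given $\sbt_2$ may be defined on variables in $\dom(\asbt_1)\setminus\dom(\sbt_1)$ where it need not agree with $\asbt_1$, no matter how the fresh variables were chosen.
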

\begin{proof}
The proof is by induction on the big-step rules in \cref{fig:opsem}. 
\begin{description}
\item [\rn{empty}] 
We have $\EqSet_1=\EqSet_2$, hence the thesis follows \DA{by} taking $\EqSet'_2=\EqSet'_1$. 

\item [\rn{step}] 
We know that $\Goal = \Goal_1,\Atm,\Goal_2$, $\clause{\bAtm}{\bAtm_1,\ldots,\bAtm_n}$ is a fresh renaming of a clause in $\lpProg$, $\EqSet_1\cup\UnEq{\Atm}{\bAtm}$ is solvable, $\colpsem{\cohyp\cup\{\Atm\}}{\bAtm_1,\ldots,\bAtm_n}{\EqSet_1\cup\UnEq{\Atm}{\bAtm}}{\EqSet_3}$ and $\colpsem{\cohyp}{\Goal_1,\Goal_2}{\EqSet_3}{\EqSet_2}$ hold. 
We can assume that variables occurring in the selected clause do not belong to $\dom(\asbt_1)$ since such variables are fresh and $\dom(\asbt_1)$  is a finite set.
Since $\EqSet_1\cup\UnEq{\Atm}{\bAtm}\subseteq\EqSet_3 \subseteq \EqSet_2$ by \cref{prop:opsem-monotone}, we have $\sbt_2 \in\gsol{\EqSet_1\cup\UnEq{\Atm}{\bAtm}}$ and denote by $\sbt_1'$ the restriction of $\sbt_2$ to $\dom(\sbt_1)\cup\GVar{\bAtm}$. 
It is easy to see that, by construction, $\sbt_1\sbtord\sbt'_1$ and $\sbt'_1\in \gsol{\EqSet_1\cup\UnEq{\Atm}{\bAtm}}$ and $\sbtcomp{\sbt'_1}{\asbt_1}$, since $\dom(\sbt'_1)\cap\dom(\asbt_1) = \dom(\sbt_1)$ and $\sbt_1\sbtord\asbt_1$ by hypothesis. 
Hence, $\asbt'_1=\sbt'_1\sbtjoin\asbt_1$ is well-defined and $\sbt'_1\sbtord\asbt'_1$. 
Since $\EqSet_3\subseteq\EqSet_2$, $\sbt_2\in\gsol{\EqSet_3}$ and, if $\sbt'_2$ is the restriction of $\sbt_2$ to $\dom(\sbt'_1)\cup\GVar{\EqSet_3}$, then $\sbt'_2\in\gsol{\EqSet_3}$ as well, and $\sbt'_1\sbtord\sbt'_2$. 
Therefore, by induction hypothesis, we get that $\colpsem{\cohyp\cup\{\Atm\}}{\bAtm_1,\ldots,\bAtm_n}{\EqSet'_1\cup\UnEq{\Atm}{\bAtm}}{\EqSet'_3}$ holds and there is $\asbt'_2\in\gsol{\EqSet'_3}$ such that $\sbt'_2\sbtord\asbt'_2$, with $\EqSet_3\subseteq\EqSet'_3$. 
Since $\sbt'_2\sbtord\sbt_2$ and $\sbt'_2\sbtord\asbt'_2$, again by induction hypothesis, we get that $\colpsem{\cohyp}{\Goal_1,\Goal_2}{\EqSet'_3}{\EqSet'_2}$ holds and there is $\asbt_2\in\gsol{\EqSet'_2}$ such that $\sbt_2\sbtord\asbt_2$, with $\EqSet_2\subseteq\EqSet'_2$. 
Then, the thesis follows \DA{by} applying rule \rn{step}. 

\item [\rn{co-hyp}] 
We know that $\Goal = \Goal_1,\Atm,\Goal_2$, $\EqSet_1\cup\UnEq{\Atm}{\bAtm}$ is solvable for some $\bAtm\in\cohyp$, $\lpsem{\Atm}{\EqSet_1\cup\UnEq{\Atm}{\bAtm}}{\EqSet_3}$ and $\colpsem{\cohyp}{\Goal_1,\Goal_2}{\EqSet_3}{\EqSet_2}$ hold. 
Since $\EqSet_1\cup\UnEq{\Atm}{\bAtm}\subseteq\EqSet_3 \subseteq \EqSet_2$ by \cref{prop:opsem-monotone}, we have $\sbt_2 \in\gsol{\EqSet_1\cup\UnEq{\Atm}{\bAtm}}$ and denote by $\sbt_1'$ the restriction of $\sbt_2$ to $\dom(\sbt_1)\cup\GVar{\bAtm}$, but, as $\GVar{\bAtm}\subseteq \GVar{\EqSet_1}$, $\sbt_1\DA{\in}\gsol{\EqSet_1}$ and $\sbt_1\sbtord\sbt_2$, we get $\sbt'_1=\sbt_1$, thus $\sbt_1\in\gsol{\EqSet_1\cup\UnEq{\Atm}{\bAtm}}$. 
Hence, since $\sbt_1\sbtord\asbt_1$, we get $\asbt_1\in\gsol{\EqSet_1\cup\UnEq{\Atm}{\bAtm}}$ and so it also belongs to $\gsol{\EqSet'_1\cup\UnEq{\Atm}{\bAtm}}$, that is, in particular, $\EqSet'_1\cup\UnEq{\Atm}{\bAtm}$ is solvable. 
Since $\EqSet_3\subseteq\EqSet_2$, $\sbt_2\in\gsol{\EqSet_3}$ and, if $\sbt'_2$ is the restriction of $\sbt_2$ to $\dom(\sbt_1)\cup \GVar{\EqSet_3}$, then $\sbt'_2\in\gsol{\EqSet_3}$ as well, and $\sbt_1\sbtord\sbt'_2$. 
Therefore, by induction hypothesis, we get $\lpsem{\Atm}{\EqSet'_1\cup\UnEq{\Atm}{\bAtm}}{\DA{\EqSet'_3}}$ and there is $\asbt'_2\in\gsol{\EqSet'_3}$ such that $\sbt'_2\sbtord\asbt'_2$, with $\EqSet_3\subseteq\EqSet'_3$. 
Since $\sbt'_2\sbtord\sbt_2$ and $\sbt'_2\sbtord\asbt'_2$, again by induction hypothesis, we get that $\colpsem{\cohyp}{\Goal_1,\Goal_2}{\EqSet'_3}{\EqSet'_2}$ holds and there is $\asbt_2\in\gsol{\EqSet'_2}$ such that $\sbt_2\sbtord\asbt_2$, with $\EqSet_2\subseteq\EqSet'_2$. 
Then, the thesis follows \DA{by} applying rule \rn{co-hyp}. 

\end{description}
\end{proof}

\begin{lemma}\label{lem:ext-goal}
Let $\extG{\Goal}{\EqSet}$ be a goal, $\sbt\in\gsol{\EqSet}$ and $\EqSet_1$ and $\EqSet_2$ be sets of equations such that $\sbt\sbtord\sbt_1$ and $\sbt\sbtord\sbt_2$, for some $\sbt_1\in\gsol{\EqSet_1}$ and $\sbt_2\in\gsol{\EqSet_2}$. 
If $\colp{\lpProg}{\colpProg}{\cohyp}{\Atm}{\EqSet}{\EqSet_1}$ and $\colp{\lpProg}{\colpProg}{\cohyp}{\Goal_1,\Goal_2}{\EqSet}{\EqSet_2}$, then 
there exists $\EqSet_3$ such that $\colp{\lpProg}{\colpProg}{\cohyp}{\Goal_1,\Atm,\Goal_2}{\EqSet}{\EqSet_3}$ and $\sbt\sbtord\sbt_3$, for some $\sbt_3\in\gsol{\EqSet_3}$.
\end{lemma}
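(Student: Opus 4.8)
The plan is to build the required derivation by resolving the selected atom $\Atm$ first, exactly as prescribed by the given derivation of $\colp{\lpProg}{\colpProg}{\cohyp}{\Atm}{\EqSet}{\EqSet_1}$, and then continuing with the remaining goal $\Goal_1,\Goal_2$ from the output $\EqSet_1$ of that first resolution. The continuation derivation is obtained by \emph{lifting} the given derivation of $\colp{\lpProg}{\colpProg}{\cohyp}{\Goal_1,\Goal_2}{\EqSet}{\EqSet_2}$ along the inclusion $\EqSet\subseteq\EqSet_1$ by means of \cref{prop:ext-eq}. No induction is needed: the argument is a case analysis on the last rule of the first given derivation plus one invocation of \cref{prop:ext-eq}.

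First I would record that, by \cref{prop:opsem-monotone}, the derivation of $\colp{\lpProg}{\colpProg}{\cohyp}{\Atm}{\EqSet}{\EqSet_1}$ gives $\EqSet\subseteq\EqSet_1$. Together with the hypothesis $\sbt\sbtord\sbt_1$ for some $\sbt_1\in\gsol{\EqSet_1}$, this lets me apply \cref{prop:ext-eq} to the goal $\extG{\Goal_1,\Goal_2}{\EqSet}$, with $\sbt$ as the given solution, $\EqSet_1$ as the larger equation set, $\sbt_1$ as the corresponding witness, the derivation $\colp{\lpProg}{\colpProg}{\cohyp}{\Goal_1,\Goal_2}{\EqSet}{\EqSet_2}$, and its witness $\sbt\sbtord\sbt_2\in\gsol{\EqSet_2}$. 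This yields a set of equations $\EqSet_3$ with $\EqSet_2\subseteq\EqSet_3$, a derivation $\colpsem{\cohyp}{\Goal_1,\Goal_2}{\EqSet_1}{\EqSet_3}$, and a substitution $\asbt_3\in\gsol{\EqSet_3}$ with $\sbt_1\sbtord\asbt_3$.

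Next I would case-split on the last rule of the derivation of $\colp{\lpProg}{\colpProg}{\cohyp}{\Atm}{\EqSet}{\EqSet_1}$. Since its goal is the single atom $\Atm$, hence non-empty, the rule is either \rn{step} or \rn{co-hyp}; in both cases the ``remaining goal'' is empty, so the second premise is an instance of \rn{empty}, which forces the intermediate equation set to coincide with $\EqSet_1$. Hence the first premise is $\colpsem{\cohyp\cup\{\Atm\}}{\cAtm_1,\ldots,\cAtm_n}{\EqSet\cup\UnEq{\Atm}{\bAtm}}{\EqSet_1}$ for a fresh renaming $\clause{\bAtm}{\cAtm_1,\ldots,\cAtm_n}$ of a clause of $\lpProg$ with $\unifiable{\EqSet}{\Atm}{\bAtm}$ (case \rn{step}), or $\lpsem{\Atm}{\EqSet\cup\UnEq{\Atm}{\bAtm}}{\EqSet_1}$ for some $\bAtm\in\cohyp$ with $\unifiable{\EqSet}{\Atm}{\bAtm}$ and $\colpProg\ne\emptyset$ (case \rn{co-hyp}). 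I would then apply the same rule to the goal $\Goal_1,\Atm,\Goal_2$ from $\EqSet$, selecting $\Atm$ and using the same clause, respectively the same coinductive hypothesis, $\bAtm$: the side conditions are unchanged since resolution of $\Atm$ still starts from $\EqSet$, the first premise is exactly the one just extracted, and the second premise is the lifted derivation $\colpsem{\cohyp}{\Goal_1,\Goal_2}{\EqSet_1}{\EqSet_3}$. This produces $\colp{\lpProg}{\colpProg}{\cohyp}{\Goal_1,\Atm,\Goal_2}{\EqSet}{\EqSet_3}$, and taking $\sbt_3:=\asbt_3$ we get $\sbt_3\in\gsol{\EqSet_3}$ and, by transitivity of $\sbtord$ from $\sbt\sbtord\sbt_1$ and $\sbt_1\sbtord\asbt_3$, also $\sbt\sbtord\sbt_3$, as required.

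The steps are all routine once \cref{prop:ext-eq} is available; the only point I expect to need care is the side condition on the fresh renaming in case \rn{step}, which in the new derivation must avoid the variables of $\Goal_1,\Goal_2$ as well. This is handled exactly as in the proof of \cref{prop:ext-eq}: since the renamed variables are fresh and $\GVar{\Goal_1,\Goal_2}$ is finite, we may $\alpha$-rename the extracted sub-derivation so that no clash arises.
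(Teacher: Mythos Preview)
Your proof is correct and follows essentially the same route as the paper's: use \cref{prop:opsem-monotone} to get $\EqSet\subseteq\EqSet_1$, invoke \cref{prop:ext-eq} to lift the derivation for $\Goal_1,\Goal_2$ to start from $\EqSet_1$, and then case-split on the last rule applied to the single-atom goal $\Atm$, replacing its trivial $\colpsem{\cohyp}{\EList}{\EqSet_1}{\EqSet_1}$ premise by the lifted derivation. Your exposition is in fact more detailed than the paper's sketch, and your remark on ensuring freshness of the renaming with respect to $\GVar{\Goal_1,\Goal_2}$ is a sensible care the paper leaves implicit.
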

\begin{proof}
We sketch the proof. 
By \cref{prop:opsem-monotone}, we have $\EqSet\subseteq\EqSet_1$ and by \cref{prop:ext-eq} we get $\colp{\lpProg}{\colpProg}{\cohyp}{\Goal_1,\Goal_2}{\EqSet_1}{\EqSet_3}$, with $\EqSet_2\subseteq\EqSet_3$ and $\sbt_2\sbtord\sbt_3$, for some $\sbt_3\in\gsol{\EqSet_3}$.
By transitivity of $\sbtord$ we get $\sbt\sbtord\sbt_3$. 
Then, the thesis follows by case analysis on the last applied rule in the derivation of $\colp{\lpProg}{\colpProg}{\cohyp}{\Atm}{\EqSet}{\EqSet_1}$, \DA{by} replacing the premise $\colpsem{\cohyp}{\EList}{\EqSet_1}{\EqSet_1}$ with the judgement $\colp{\lpProg}{\colpProg}{\cohyp}{\Goal_1,\Goal_2}{\EqSet_1}{\EqSet_3}$. 
\end{proof}

\begin{lemma}\label{lem:bound-complete}
Let $\Atm$ be an atom and $\EqSet$ be a set of equations. 
For all $\sbt\in\gsol{\EqSet}$, 
if $\appSubst{\Atm}{\sbt}\in\Ind{\lpProg\cup\colpProg}$, then there exists $\EqSet'$ such that 
$\colp{\lpProg\cup\colpProg}{\emptyset}{\emptyset}{\Atm}{\EqSet}{\EqSet'}$ and 
$\sbt\sbtord\asbt$, for some $\asbt\in \gsol{\EqSet'}$. 
\end{lemma}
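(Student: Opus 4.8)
The plan is to recognise Lemma~\ref{lem:bound-complete} as a completeness (``lifting'') result for \emph{standard} SLD resolution: since the coclause component is empty, rule \rn{co-hyp} can never fire, so $\colp{\lpProg\cup\colpProg}{\emptyset}{\cohyp}{\Atm}{\EqSet}{\EqSet'}$ is just ordinary SLD resolution of $\Atm$ in the program $\lpProg\cup\colpProg$, and the set $\cohyp$ of coinductive hypotheses is dead weight. I would therefore first generalise the statement to an arbitrary $\cohyp$ (the generalisation is harmless for the reason just given, and it makes the recursion on clause bodies go through), and then prove it by strong induction on the height of the finite proof tree of the ground atom $\appSubst{\Atm}{\sbt}$ in $\Ground{\lpProg\cup\colpProg}$, with $\EqSet$, $\sbt$, $\cohyp$ and $\Atm$ universally quantified. (Recall that, by hypothesis, $\appSubst{\Atm}{\sbt}\in\Ind{\lpProg\cup\colpProg}$, hence it is ground --- in particular $\GVar{\Atm}\subseteq\dom(\sbt)$ --- and has such a finite proof tree.)

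For the inductive step, the root of the proof tree of $\appSubst{\Atm}{\sbt}$ is a ground instance of some clause $\clause{H}{B_1,\dots,B_k}$ of $\lpProg\cup\colpProg$: there is a ground substitution $\rho$ with $\appSubst{H}{\rho}=\appSubst{\Atm}{\sbt}$ and, for each $j$, $\appSubst{B_j}{\rho}\in\Ind{\lpProg\cup\colpProg}$ via a strictly smaller proof tree. Operationally, I would take a fresh renaming $\tau$ of the clause (mapping its variables to variables not occurring in $\EqSet$, $\Atm$, $\cohyp$ nor in $\dom(\sbt)$, which is possible since all these sets are finite), set $\EqSet_1 \mathrel{:=} \EqSet\cup\UnEq{\Atm}{\appSubst{H}{\tau}}$, and define $\sbt'$ to be $\sbt$ on $\dom(\sbt)$ and $\rho\circ\tau^{-1}$ on the fresh variables introduced by $\tau$. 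Then $\sbt'$ is ground, $\sbt\sbtord\sbt'$, and $\sbt'\in\gsol{\EqSet_1}$, because $\appSubst{\Atm}{\sbt'}=\appSubst{\Atm}{\sbt}=\appSubst{H}{\rho}=\appSubst{(\appSubst{H}{\tau})}{\sbt'}$; in particular $\EqSet_1$ is solvable. Moreover $\appSubst{(\appSubst{B_j}{\tau})}{\sbt'}=\appSubst{B_j}{\rho}$, which lies in $\Ind{\lpProg\cup\colpProg}$ with a strictly smaller proof tree, so the induction hypothesis applies to each body atom $\appSubst{B_j}{\tau}$ with equations $\EqSet_1$, solution $\sbt'$ and hypotheses $\cohyp\cup\{\Atm\}$. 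Note that the case $k=0$ (a fact) needs no appeal to the induction hypothesis and already covers the leaves of proof trees, so no separate base case is required.

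It then remains to assemble the body resolutions into a resolution of the sequence $\appSubst{B_1}{\tau},\dots,\appSubst{B_k}{\tau}$ starting from $\EqSet_1$: I would do this by an inner induction on $k$, using rule \rn{empty} for $k=0$ and Lemma~\ref{lem:ext-goal} at each step, always with $\sbt'$ as the common solution (every resolution involved admits a solution extending $\sbt'$), obtaining $\colp{\lpProg\cup\colpProg}{\emptyset}{\cohyp\cup\{\Atm\}}{\appSubst{B_1}{\tau},\dots,\appSubst{B_k}{\tau}}{\EqSet_1}{\EqSet_2}$ together with some solution of $\EqSet_2$ above $\sbt'$. Applying rule \rn{step} to the goal consisting of $\Atm$ alone (so $\Goal_1=\Goal_2=\EList$, the trailing premise discharged by \rn{empty}) then yields $\colp{\lpProg\cup\colpProg}{\emptyset}{\cohyp}{\Atm}{\EqSet}{\EqSet_2}$, and transitivity of $\sbtord$ gives $\sbt\sbtord\sbt'\sbtord\asbt$ for the exhibited $\asbt\in\gsol{\EqSet_2}$, closing the case. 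The main obstacle I anticipate is purely the substitution bookkeeping around the fresh renaming --- correctly defining $\sbt'$ on the disjoint union of $\dom(\sbt)$ and the fresh variables and checking that it solves $\EqSet\cup\UnEq{\Atm}{\appSubst{H}{\tau}}$ --- together with verifying the compatibility side conditions needed to invoke Lemma~\ref{lem:ext-goal} in the inner induction; none of this is conceptually deep, but it is where the care lies.
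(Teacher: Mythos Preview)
Your proposal is correct and follows essentially the same route as the paper: induction on the finite derivation of $\appSubst{\Atm}{\sbt}$ in $\Ground{\lpProg\cup\colpProg}$, lifting the ground clause via a fresh renaming, extending $\sbt$ to a solution of $\EqSet\cup\UnEq{\Atm}{\appSubst{H}{\tau}}$, applying the induction hypothesis to each body atom, combining via repeated use of Lemma~\ref{lem:ext-goal}, and closing with rule \rn{step}. The one difference is that you explicitly generalise over $\cohyp$, whereas the paper keeps $\cohyp=\emptyset$ throughout; your generalisation is the cleaner way to match the premise of \rn{step}, which requires the body to be resolved under $\cohyp\cup\{\Atm\}$, while the paper silently relies on the (easy) fact that $\cohyp$ is irrelevant when $\colpProg=\emptyset$.
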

\begin{proof}
The proof is by induction on the derivation of $\appSubst{\Atm}{\sbt}$ in $\Ground{\lpProg\cup\colpProg}$. 
Let $\clause{\Atm'}{\Atm_1,\ldots,\DA{\Atm_n}} \in \Ground{\lpProg\cup\colpProg}$ be the last applied rule in the finite derivation of $\appSubst{\Atm}{\sbt}$, hence we have $\Atm' = \appSubst{\Atm}{\sbt}$. 
By definition of $\Ground{\lpProg\cup\colpProg}$, we know there is \DA{a} fresh renaming of a clause in $\lpProg\cup\colpProg$, denote it by $\clause{\bAtm}{\bAtm_1,\ldots,\bAtm_n}$, and a substitution $\sbt'$ such that $\appSubst{\bAtm}{\sbt'} = \Atm'$ and $\appSubst{\bAtm_i}{\sbt'} = \Atm_i$, for all $i \in 1..n$. 
Since the variables in this clause are fresh, we can assume $\dom(\sbt)\cap\dom(\sbt') = \emptyset$, hence $\sbt'' = \sbt\sbtjoin\sbt'$ is well-defined and, by construction, we have 
$\sbt\sbtord\sbt''$, thus $\sbt''\in\gsol{\EqSet}$, and 
$\appSubst{\Atm}{\sbt''} = \appSubst{\bAtm}{\sbt''}$, that is, $\sbt'' \in \gsol{\UnEq{\Atm}{\bAtm}}$. 
As a consequence $\sbt''$ is a solution of $\EqSet\cup\UnEq{\Atm}{\bAtm}$, hence, 
by induction hypothesis, we get that, for all $i\in 1..n$, $\lpsem{\bAtm_i}{\EqSet\cup\UnEq{\Atm}{\bAtm}}{\EqSet_i}$ holds and $\sbt''\sbtord\asbt_i$, for some $\asbt_i\in\gsol{\EqSet_i}$. 

By applying $n$ times \cref{lem:ext-goal}, we get $\lpsem{\bAtm_1,\ldots,\bAtm_n}{\EqSet\cup\UnEq{\Atm}{\bAtm}}{\EqSet'}$ and $\sbt''\sbtord\asbt$, for some $\asbt\in\gsol{\EqSet'}$. 
Then, the thesis follows by applying rule \rn{step} to this judgement and $\lpsem{\EList}{\EqSet'}{\EqSet'}$. 
\end{proof}

\begin{lemma} \label{lem:atom-complete}
For all $\cohyp$ and $\extG{\Atm}{\EqSet}$,\\
%Let $\EqSet$ be a set of equations.
if $\validInd{\Loopcois{\lpProg}{\colpProg}}{\LoopJ{\appSubst{\cohyp}{\sbt}}{\appSubst{\Atm}{\sbt}}}$,  and $\sbt\in\gsol{\EqSet}$,  then
$\colpNarrow{\lpProg}{\colpProg}{\cohyp}{\Atm}{\EqSet}{\EqSet'}$ and 
$\sbt{\sbtord}\asbt$, for some $\EqSet'$ \mbox{and $\asbt\in \gsol{\EqSet'}$}. 
\end{lemma}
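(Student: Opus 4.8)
The plan is to prove \cref{lem:atom-complete} by induction on the derivation of $\validInd{\Loopcois{\lpProg}{\colpProg}}{\LoopJ{\appSubst{\cohyp}{\sbt}}{\appSubst{\Atm}{\sbt}}}$ in the inference system $\Loopcois{\lpProg}{\colpProg}$ of \cref{def:loop-is}, distinguishing the last applied (meta-)rule, which is either \rn{hp} or \rn{rule}. Throughout I use that $\sbt$ grounds $\Atm$ (so that $\appSubst{\Atm}{\sbt}$ is a legal judgement in $\coHB$) and, by the standing invariant $\GVar{\cohyp}\subseteq\GVar{\EqSet}$, grounds $\cohyp$ as well; hence $\sbt$ and any $\sbt''$ with $\sbt\sbtord\sbt''$ agree on $\Atm$ and on $\cohyp$. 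The external ingredients are \cref{lem:bound-complete} (which turns membership in $\Ind{\lpProg\cup\colpProg}$ into a standard SLD derivation in $\lpProg\cup\colpProg$), \cref{lem:ext-goal} (to merge resolutions of single atoms into the resolution of their sequence), and \cref{prop:opsem-monotone}.

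\emph{Case \rn{hp}.} Here $\appSubst{\Atm}{\sbt}\in\appSubst{\cohyp}{\sbt}$ and $\appSubst{\Atm}{\sbt}\in\Ind{\lpProg\cup\colpProg}$. From the first fact there is $\bAtm\in\cohyp$ with $\appSubst{\bAtm}{\sbt}=\appSubst{\Atm}{\sbt}$, so $\sbt$ unifies $\Atm$ and $\bAtm$; since also $\sbt\in\gsol{\EqSet}$, we get $\sbt\in\gsol{\EqSet\cup\UnEq{\Atm}{\bAtm}}$, in particular $\unifiable{\EqSet}{\Atm}{\bAtm}$. If $\colpProg\ne\emptyset$, \cref{lem:bound-complete} applied to $\Atm$, $\EqSet\cup\UnEq{\Atm}{\bAtm}$ and $\sbt$ (using $\appSubst{\Atm}{\sbt}\in\Ind{\lpProg\cup\colpProg}$) gives some $\EqSet'$ with $\colp{\lpProg\cup\colpProg}{\emptyset}{\emptyset}{\Atm}{\EqSet\cup\UnEq{\Atm}{\bAtm}}{\EqSet'}$ and $\sbt\sbtord\asbt$ for some $\asbt\in\gsol{\EqSet'}$; then rule \rn{co-hyp}, whose second premise $\colp{\lpProg}{\colpProg}{\cohyp}{\EList}{\EqSet'}{\EqSet'}$ holds by \rn{empty}, yields $\colp{\lpProg}{\colpProg}{\cohyp}{\Atm}{\EqSet}{\EqSet'}$, as required. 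If instead $\colpProg=\emptyset$, rule \rn{co-hyp} is never applicable, so the operational semantics does not depend on the coinductive hypotheses (as remarked in \cref{sect:lp-opsem}); hence it suffices to derive $\colp{\lpProg}{\emptyset}{\emptyset}{\Atm}{\EqSet}{\EqSet'}$, which is exactly \cref{lem:bound-complete} applied to $\Atm$, $\EqSet$ and $\sbt$.

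\emph{Case \rn{rule}.} Here there is $(\clause{\appSubst{\Atm}{\sbt}}{\bAtm^\circ_1,\ldots,\bAtm^\circ_n})\in\Ground{\lpProg}$ with strict subderivations of $\validInd{\Loopcois{\lpProg}{\colpProg}}{\LoopJ{\appSubst{\cohyp}{\sbt}\cup\{\appSubst{\Atm}{\sbt}\}}{\bAtm^\circ_j}}$ for each $j\in1..n$. First I pick a fresh renaming $\clause{\bAtm}{\bAtm_1,\ldots,\bAtm_n}$ of the underlying clause of $\lpProg$, with variables disjoint from $\dom(\sbt)$, and a ground $\sbt'$ with $\appSubst{\bAtm}{\sbt'}=\appSubst{\Atm}{\sbt}$ and $\appSubst{\bAtm_j}{\sbt'}=\bAtm^\circ_j$; by freshness $\sbtcomp{\sbt}{\sbt'}$, so $\sbt'':=\sbt\sbtjoin\sbt'$ is well-defined, $\sbt\sbtord\sbt''$, $\appSubst{\Atm}{\sbt''}=\appSubst{\bAtm}{\sbt''}$ whence $\sbt''\in\gsol{\EqSet\cup\UnEq{\Atm}{\bAtm}}$ (so $\unifiable{\EqSet}{\Atm}{\bAtm}$), $\appSubst{(\cohyp\cup\{\Atm\})}{\sbt''}=\appSubst{\cohyp}{\sbt}\cup\{\appSubst{\Atm}{\sbt}\}$, and $\appSubst{\bAtm_j}{\sbt''}=\bAtm^\circ_j$. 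Applying the induction hypothesis to each $\bAtm_j$, with coinductive hypotheses $\cohyp\cup\{\Atm\}$, equations $\EqSet\cup\UnEq{\Atm}{\bAtm}$ and substitution $\sbt''$, I get $\EqSet_j$ with $\colp{\lpProg}{\colpProg}{\cohyp\cup\{\Atm\}}{\bAtm_j}{\EqSet\cup\UnEq{\Atm}{\bAtm}}{\EqSet_j}$ and $\sbt''\sbtord\asbt_j\in\gsol{\EqSet_j}$. Starting from $\colp{\lpProg}{\colpProg}{\cohyp\cup\{\Atm\}}{\EList}{\EqSet\cup\UnEq{\Atm}{\bAtm}}{\EqSet\cup\UnEq{\Atm}{\bAtm}}$ (rule \rn{empty}) and invoking \cref{lem:ext-goal} $n$ times, exactly as in the proof of \cref{lem:bound-complete}, I merge these into $\colp{\lpProg}{\colpProg}{\cohyp\cup\{\Atm\}}{\bAtm_1,\ldots,\bAtm_n}{\EqSet\cup\UnEq{\Atm}{\bAtm}}{\EqSet'}$ with $\sbt''\sbtord\asbt\in\gsol{\EqSet'}$. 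Finally, rule \rn{step} with the chosen fresh renaming, together with $\colp{\lpProg}{\colpProg}{\cohyp}{\EList}{\EqSet'}{\EqSet'}$ (rule \rn{empty}), yields $\colp{\lpProg}{\colpProg}{\cohyp}{\Atm}{\EqSet}{\EqSet'}$; and $\sbt\sbtord\sbt''\sbtord\asbt$ gives $\sbt\sbtord\asbt$, closing the case.

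I expect the main obstacle to be the bookkeeping in the \rn{rule} case: converting the ground clause from $\Ground{\lpProg}$ into a fresh renaming plus a grounding substitution $\sbt'$, combining it with $\sbt$ into $\sbt''$ while checking compatibility and the identities on $\cohyp$, $\Atm$ and the body atoms, and then tracking through the $n$-fold iteration of \cref{lem:ext-goal} that the computed solutions stay $\sbtord$-above $\sbt''$ (hence above $\sbt$). The only other subtlety is the degenerate subcase $\colpProg=\emptyset$ of \rn{hp}, which is handled by the observation that coinductive hypotheses are irrelevant to the operational semantics when there are no coclauses.
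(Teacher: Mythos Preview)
Your proof is correct and follows essentially the same route as the paper: induction on the derivation in $\Loopcois{\lpProg}{\colpProg}$, handling \rn{hp} via \cref{lem:bound-complete} and rule \rn{co-hyp}, and \rn{rule} via a fresh renaming, the joined substitution $\sbt''$, the induction hypothesis on the body atoms, $n$ applications of \cref{lem:ext-goal}, and rule \rn{step}. Two small differences are worth noting: in case \rn{hp} you invoke \cref{lem:bound-complete} directly with the enlarged equation set $\EqSet\cup\UnEq{\Atm}{\bAtm}$, whereas the paper applies it with $\EqSet$ and then uses \cref{prop:ext-eq} to extend---your shortcut is legitimate and slightly cleaner; and you explicitly address the degenerate subcase $\colpProg=\emptyset$ (where rule \rn{co-hyp} is blocked by its side condition), which the paper's proof silently skips---your handling via irrelevance of $\cohyp$ when $\colpProg=\emptyset$ is the right fix, though it relies on an easy auxiliary observation (any derivation of $\colp{\lpProg}{\emptyset}{\emptyset}{\Goal}{\EqSet}{\EqSet'}$ yields one of $\colp{\lpProg}{\emptyset}{\cohyp}{\Goal}{\EqSet}{\EqSet'}$ for arbitrary $\cohyp$) that you might want to state explicitly.
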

\begin{proof}
The proof is by induction on the derivation of $\LoopJ{\appSubst{\cohyp}{\sbt}}{\appSubst{\Atm}{\sbt}}$ (see \cref{def:loop-is}). 
\begin{description}
\item [\rn{hp}]
We know that $\appSubst{\Atm}{\sbt}\in\appSubst{\cohyp}{\sbt}$ and $\appSubst{\Atm}{\sbt}\in\Ind{\lpProg\cup\colpProg}$, that is, $\sbt\in\lpbdans{\Atm}{\EqSet}$. 
By \cref{lem:bound-complete}, we get $\lpsem{\Atm}{\EqSet}{\EqSet_1}$ and $\sbt\sbtord\sbt_1$, for some $\sbt_1\in \gsol{\EqSet_1}$. 
Since $\appSubst{\Atm}{\sbt}\in\appSubst{\cohyp}{\sbt}$, we know that there is $\bAtm\in\cohyp$ such that $\appSubst{\DA{\Atm}}{\sbt} = \appSubst{\bAtm}{\sbt}$, that is $\sbt\in \gsol{\UnEq{\Atm}{\bAtm}}$, thus $\sbt \in \gsol{\EqSet\cup\UnEq{\Atm}{\bAtm}}$. 
Therefore, by \cref{prop:ext-eq}, we get $\lpsem{\Atm}{\EqSet\cup\UnEq{\Atm}{\bAtm}}{\EqSet'}$ and $\sbt\sbtord\sbt_1\sbtord\asbt$, for some $\asbt\in\gsol{\EqSet'}$. 
The thesis follows by applying rule \rn{co-hyp} to this judgement and $\colpsem{\cohyp}{\EList}{\EqSet'}{\EqSet'}$. 
 
\item [\rn{rule}] 
We know there is a $\clause{\Atm'}{\Atm_1,\ldots,\Atm_n} \in \Ground{\lpProg}$ such that $\Atm' = \appSubst{\Atm}{\sbt}$ and $\validInd{\Loopcois{\lpProg}{\colpProg}}{\LoopJ{\appSubst{\cohyp}{\sbt}\cup\{\appSubst{\Atm}{\sbt}\}}{\Atm_i}}$ is derivable, for all $i\in 1..n$. 
By definition of $\Ground{\lpProg\cup\colpProg}$, we know there is \DA{a} fresh renaming of a clause in $\lpProg\cup\colpProg$, denote it by $\clause{\bAtm}{\bAtm_1,\ldots,\bAtm_n}$, and a substitution $\sbt'$ such that $\appSubst{\bAtm}{\sbt'} = \Atm'$ and $\appSubst{\bAtm_i}{\sbt'} = \Atm_i$, for all $i \in 1..n$. 
Since the variables in this clause are fresh, we can assume $\dom(\sbt)\cap\dom(\sbt') = \emptyset$, hence $\sbt'' = \sbt\sbtjoin\sbt'$ is well-defined and, by construction, we have 
$\sbt\sbtord\sbt''$, thus $\sbt''\in\gsol{\EqSet}$, and 
$\appSubst{\Atm}{\sbt''} = \appSubst{\bAtm}{\sbt''}$, that is, $\sbt'' \in \gsol{\UnEq{\Atm}{\bAtm}}$ and 
$\appSubst{\cohyp}{\sbt}\cup\{\appSubst{\Atm}{\sbt}\} = \appSubst{(\cohyp\cup\{\Atm\})}{\sbt''}$. 
As a consequence $\sbt''$ is a solution of $\EqSet\cup\UnEq{\Atm}{\bAtm}$, hence, 
by induction hypothesis, we get that, for all $i\in 1..n$, $\colpsem{\cohyp\cup\{\Atm\}}{\bAtm_i}{\EqSet\cup\UnEq{\Atm}{\bAtm}}{\EqSet_i}$ holds and $\sbt''\sbtord\asbt_i$, for some $\asbt_i\in\gsol{\EqSet_i}$. 

By applying $n$ times \cref{lem:ext-goal}, we get $\colpsem{\cohyp\cup\{\Atm\}}{\bAtm_1,\ldots,\bAtm_n}{\EqSet\cup\UnEq{\Atm}{\bAtm}}{\EqSet'}$ and $\sbt''\sbtord\asbt$, for some $\asbt\in\gsol{\EqSet'}$. 
By transitivity we get $\sbt\sbtord\asbt$.
Then, the thesis follows by applying rule \rn{step} to this judgement and $\lpsem{\EList}{\EqSet'}{\EqSet'}$. 

\end{description}
\end{proof}

\begin{proofOf}{\cref{lem:reg-lp-complete}} 
The proof is by induction on the number of atoms in $\Goal = \Atm_1,\ldots,\Atm_n$. 
If $\Goal = \EList$, then the thesis follows by rule \rn{empty}, taking $\EqSet' = \EqSet$ and $\asbt = \sbt$. 
If $\Goal = \Goal_1,\Atm,\Goal_2$, 
we know by hypothesis that 
$\validInd{\Loopcois{\lpProg}{\colpProg}}{\LoopJ{\appSubst{\cohyp}{\sbt}}{\appSubst{\Atm}{\sbt}}}$,
hence, by \cref{lem:atom-complete} we get $\colp{\lpProg}{\colpProg}{\cohyp}{\Atm}{\EqSet}{\EqSet_1}$ and $\sbt\sbtord\sbt_1$, for some $\sbt_1\in\gsol{\EqSet_1}$.
By induction hypothesis, we get $\colpsem{\cohyp}{\Goal_1,\Goal_2}{\EqSet}{\EqSet_2}$ and $\sbt\sbtord\sbt_2$, for some $\sbt_2\in \gsol{\EqSet_2}$, 
therefore, by \cref{lem:ext-goal}, we get $\colpsem{\cohyp}{\Goal_1,\Atm,\Goal_2}{\EqSet}{\EqSet'}$ and $\sbt\sbtord\asbt$ for some $\asbt \in \gsol{\EqSet'}$.  
\end{proofOf}

\label{lastpage}
\end{document}